\documentclass[11pt]{amsart}
\usepackage{amsthm}
\usepackage{amsmath,amstext,amsthm,amsfonts,amssymb,amsthm}
\usepackage{multicol}
\usepackage{latexsym}
\usepackage{color}
\usepackage{graphicx}
\usepackage{epsf}
\usepackage{epsfig}
\usepackage{epic}
\usepackage{graphics}
\usepackage{verbatim}
\usepackage{color}
\usepackage{stmaryrd}
\usepackage{hyperref}
\newtheorem{theorem}{Theorem}[section]
\newtheorem{lemma}[theorem]{Lemma}
\theoremstyle{definition}
\newtheorem{definition}[theorem]{Definition}
\newtheorem{proposition}[theorem]{Proposition}

\theoremstyle{remark}
\newtheorem{remark}[theorem]{Remark}
\numberwithin{equation}{section}

\newcommand{\quat}{\mathbb H}

\newcommand{\an}{\mathfrak{a}}

\newcommand{\q}{\mathbf{q}}
\newcommand{\I}{\mathbb{I}}
\newcommand{\HI}{\mathfrak{H}}

\newcommand{\D}{\mathfrak{D}}
\newcommand{\C}{\mathbb{C}}

\newcommand{\qu}{\mathfrak{q}}
\newcommand{\pu}{\mathfrak{p}}
\newcommand{\as}{\mathsf{a}}
\newcommand{\asd}{\mathsf{a}^{\dagger}}

\newcommand{\oqu}{\overline{\mathfrak{q}}}
\newcommand{\opu}{\overline{\mathfrak{p}}}
\newcommand{\HIB}{\mathfrak{H}^B_r}

\textwidth14.75cm
\textheight22cm
\hoffset-.75cm
\voffset-.25cm
%%%%%%%%%%%%%%%%%%%%%%%%%%%%%%%%%%%%%%%%%%%%%%%%%%%%%%%%%%%%%%%%%%%%%%%%%%%%%%%%%%%%%%%%%%%%%%%%%%%%%%%%%%%%%%%%%%%%%%%%%%%%%%%%%%%%%%%%%%%%%%%%%%%%%%%%%%%%%%%%%%%%%
\begin{document}
\title[Squeezed states ]{
	Squeezed states in the quaternionic setting}
\author{ K. Thirulogasanthar$^{\dagger}$, B. Muraleetharan$^{\ddagger}$}
\address{$^{\ddagger}$ Department of Computer Science and Software Engineering, Concordia University, 1455 De Maisonneuve Blvd. West, Montreal, Quebec, H3G 1M8, Canada.}
\address{$^{\dagger}$ Department of mathematics and Statistics, University of Jaffna, Thirunelveli, Sri Lanka.}
\email{santhar@gmail.com and bbmuraleetharan@jfn.ac.lk}
\subjclass{Primary 81R30, 46E22}
\date{\today}
\thanks{K. Thirulogasanthar would like to thank the, FQRNT, Fonds de la Recherche  Nature et  Technologies (Quebec, Canada) for partial financial support Under the grant number 2017-CO-201915. Part of this work was done while he was visiting the Politecnico di Milano to which he expresses his thanks for the hospitality. He also thanks the program Professori Visitatori GNSAGA, INDAM for the support during the period in which this paper was partially written. The authors thank Prof. I Sabadini for discussions.}
\keywords{Quaternion, Displacement operator, Squeezed operator, Squeezed operator, Lie algebra.}
%%%%%%%%%%%%%%%%%%%%%%%%%%%%%%%%%%%%%%%%%%%%%%%%%%%%%%%%%%%%%%%%%%%%%%%%%%%%%%%%%%%%%%%%%%%%%%%%%%%%%%%%%%%%%
\pagestyle{myheadings}
%%%%%%%%%%%%%%%%%%%%%%%%%%%%%%%%%%%%%%%%%%%%%%%%%%%%%%%%%%%%%
%\begin{multicols}{2}
\begin{abstract}
  Using a left multiplication defined on a right quaternionic Hilbert space, we shall demonstrate that pure squeezed states can be defined with all the desired properties on a right quaternionic Hilbert space. Further, we shall also  demonstrate squeezed states can be defined on the same Hilbert space, but the noncommutativity of quaternions prevent us in getting the desired results.  However, we will show that if once considers the quaternionic slice wise approach, then the desired properties can be obtained for quaternionic squeezed states.
\end{abstract}
\maketitle
%%%%%%%%%%%%%%%%%%%%%%%%%%%%%%%%%%%%%%%%%%%%%%%%%%%%%%%%%%%%%%%%%%%%%%%%%%%%%%%%%%%%%%%%%%%%%%%%%%%%%%%%%%%%%
\section{Introduction}\label{sec_intro}
As it is well known, quantum mechanics can be formulated over the complex and the quaternionic numbers, see \cite{bvn}. In recent times, new mathematical tools in quaternionic analysis became available in the literature and, as a consequence, there has been a resurgence of interest for the quaternionic quantum mechanics.
In this formulation, in complete analogy with the complex formulation,  states  are represented by vectors of a separable quaternionic Hilbert space and observables are represented by quaternionic linear and self-adjoint operators, see e.g. the celebrated book \cite{Ad} for more information.

However, until the most recent times, an appropriate spectral theory was missing since there was not a satisfactory notion of spectrum. This difficulty has been solved with the introduction of the notion of S-spectrum (see \cite{NFC}) and, accordingly, with a proof of the spectral theorem for normal operators, see \cite{ack}.

In our paper \cite{MTI}, we offered a discussion of the various notions of momentum operator and we show that, by using the notion of left multiplication in a right quaternionic Hilbert space it is possible to define a linear self-adjoint momentum operator in complete analogy with the complex case. The possibility to introduce a left multiplication in a right quaternionic Hilbert space is very well known and a very useful tool in several cases. In fact a linear space over the quaternions is, in general, on one side (either left or right). However, in order to have good properties when considering linear operators acting on the space, it is necessary to have a multiplication on both sides. It can be always defined but it requires to fix a basis, thus all the consequences that one may deduce, have to be shown to be independent of the choice of the basis. However, when we consider a particular quantum system we always work with a fixed basis, which is the wavefunctions of the Hamiltonian (Fock space basis). Therefore we do not need to worry about working with a fixed basis.

In \cite{MTI} we have also deepened the study of an appropriate harmonic oscillator displacement operator  showing that this displacement operator leads to a square integrable, irreducible and unitary representation and that it satisfies most of the properties of its complex counterpart.

In this paper we introduce and study the squeeze operator which is formally defined as in the complex setting but where the operation involved in the definition have to be interpreted in an appropriate way. To be specific,  a squeeze operator is obtained by exponentiating $\frac{1}{2}(\pu\cdot (\asd)^2-\overline{\pu}\cdot\as^2)$  where
$\pu\cdot $ is the left multiplication by the quaternion $\pu$  and $\asd, \as$ are the creation and annihilation operator. We show that this latter operator is anti-hermitian and we study several properties of the squeezed operator. We also study quaternionic pure squeezed states, obtained by the action of the squeeze operator on the vacuum state.

Due to the non-commutative nature of quaternions, there is an intrinsic issue if one is aimed to obtain relations involving both the displacement and the squeezed operator. Suitable relations can be obtained only slice-wise.

There is a vast interest in squeezed states in various applications, particularly in the coding and transmission of information through optical devices \cite{Gaz, Lo, Yuen}. In the quaternion case, these squeezed states appear as two component states in four variables. Hence these states have move degrees of freedom and may be useful in application.

The plan of the paper is as follows. Section 2 contains some preliminaries on quaternions, right quaternionic Hilbert spaces and the notion  of left multiplication. Section 3 studies the Bargmann space of regular functions, the displacement operator, the squeeze operator and some of its properties. We also introduce some quaternionic Lie algebras constructed by taking some suitable real or complex linear spaces and equipping them with suitable Lie brackets. The expectation values and the variances of the creation and annihilation operator and of the quadrature operators are computed in this section. The fourth section is devoted to the relation involving displacement, squeeze, creation an annihilation  operators. We obtain a result similar to the one in the complex case, but due the noncommutativity  it can be proved just on quaternionic slices.

%%%%%%%%%%%%%%%%%%%%%%%%%%%%%%%%%%%%%%%%%%%%%%%%%%%%%%%%%%%%%%%%%%%%%%%%%%%%%%%%%%%%%%%%%%%%%%%%%%%%%%%%%%%%%%
\section{Mathematical preliminaries}
In this section we recall some basic facts about quaternions, their complex matrix representation, quaternionic Hilbert spaces as needed here. For details we refer the reader to \cite{Ad, Vis, Za}.
\subsection{Quaternions}
Let $\quat$ denote the field of quaternions. Its elements are of the form $\qu=q_0+q_1i+q_2j+q_3k$ where $q_0,q_1,q_2$ and $q_3$ are real numbers, and $i,j,k$ are imaginary units such that $i^2=j^2=k^2=-1$, $ij=-ji=k$, $jk=-kj=i$ and $ki=-ik=j$. The quaternionic conjugate of $\qu$ is defined to be $\overline{\qu} = q_0 - q_1i - q_2j - q_3k$. Quaternions can be represented by $2\times 2$ complex matrices:
\begin{equation}
\qu = q_0 \sigma_{0} + i \q \cdot \underline{\sigma},
\end{equation}
with $q_0 \in \mathbb R , \quad \q = (q_1, q_2, q_3)
\in \mathbb R^3$, $\sigma_0 = \mathbb{I}_2$, the $2\times 2$ identity matrix, and
$\underline{\sigma} = (\sigma_1, -\sigma_2, \sigma_3)$, where the
$\sigma_\ell, \; \ell =1,2,3$ are the usual Pauli matrices. The quaternionic imaginary units are identified as, $i = \sqrt{-1}\sigma_1, \;\; j = -\sqrt{-1}\sigma_2, \;\; k = \sqrt{-1}\sigma_3$. Thus,
\begin{equation}
\qu = \left(\begin{array}{cc}
q_0 + i q_3 & -q_2 + i q_1 \\
q_2 + i q_1 & q_0 - i q_3
\end{array}\right) \qquad %\text{and} \qquad \overline{\qu} = \qu^\dag\quad \text{(matrix adjoint)}\; .
\label{q3}
\end{equation}
and $\overline{\qu} = \qu^\dag\quad \text{(matrix adjoint)}\; .$
Using the polar coordinates:
\begin{eqnarray*}
	q_0 &=& r \cos{\theta}, \\
	q_1 &=& r \sin{\theta} \sin{\phi} \cos{\psi}, \\
	q_2 &=& r \sin{\theta} \sin{\phi} \sin{\psi}, \\
	q_3 &=& r \sin{\theta} \cos{\phi},
\end{eqnarray*}
where $(r,\phi,\theta,\psi)  \in [0,\infty)\times[0,\pi]\times[0,2\pi)^{2}$, we may write
\begin{equation}
\qu = A(r) e^{i \theta \sigma(\widehat{n})}
\label{q4},
\end{equation}
where
\begin{equation}
A(r) = r\mathbb \sigma_0
\end{equation}
and\begin{equation}
\sigma(\widehat{n}) = \left(\begin{array}{cc}
\cos{\phi} & \sin{\phi} e^{i\psi} \\
\sin{\phi} e^{-i\psi} & -\cos{\phi}
\end{array}\right).
\label{q5}\end{equation}
The matrices
$A(r)$ and $\sigma(\widehat{n})$ satisfy the conditions,
\begin{equation}
A(r) = A(r)^\dagger,~\sigma(\widehat{n})^2 = \sigma_0
,~\sigma(\widehat{n})^\dagger = \sigma(\widehat{n})
\label{san1}
\end{equation}
and
$\lbrack A(r), \sigma(\widehat{n}) \rbrack = 0.$
Note that a real norm on $\quat$ is defined by  $$\vert\qu\vert^2  := \overline{\qu} \qu = r^2 \sigma_0 = (q_0^2 +  q_1^2 +  q_2^2 +  q_3^2).$$
%A typical measure on $\quat$ may take the form
%\begin{equation}\label{measure}
%d\varsigma(r, \theta,\phi, \psi)= d\tau(r)\, d\theta\, d\Omega(\phi ,\psi )
%\end{equation}
%with $d\Omega(\phi ,\psi) = \displaystyle{\frac{1}{4\pi}} \,\sin{\phi}\, d\phi \,d\psi .$
Note also that for ${\pu},\qu\in \quat$, we have $\overline{{\pu}\qu}=\overline{\qu}~\overline{{\pu}}$, $\pu\qu\not=\qu\pu$, $\qu\overline{{ \qu}}=\overline{{\qu}}\qu$, and real numbers commute with quaternions.
Quaternions can also be interpreted as a sum of scalar and a vector by writing $$\qu=q_0+q_1i+q_2j+q_3k=(q_0,\q);$$ where $\q=q_1i+q_2j+q_3k$. This particular method of writing quaternions will help us for forming a representation of Weyl-Heisenberg Lie algebra under the quaternionic settings.
We borrow the materials as needed here from \cite{Gen1}.  Let
\begin{eqnarray*}
	\mathbb{S}&=&\{I=x_1i+x_2j+x_3k~\vert
	~x_1,x_2,x_3\in\mathbb{R},~x_1^2+x_2^2+x_3^2=1\},
\end{eqnarray*}
we call it a quaternion sphere.
 \begin{proposition}\cite{Gen1}\label{Pr1}
	For any non-real quaternion $\qu\in \quat\smallsetminus\mathbb{R}$, there exist, and are unique, $x,y\in\mathbb{R}$ with $y>0$, and $I_\qu\in\mathbb{S}$ such that $\qu=x+I_\qu y$.
\end{proposition}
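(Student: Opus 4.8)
The plan is to exhibit the decomposition explicitly from the scalar/vector splitting of $\qu$ and then to read off uniqueness from the positivity constraint on $y$. Write $\qu = q_0 + \q$ with $\q = q_1 i + q_2 j + q_3 k$ its vector part. Since $\qu \in \quat \smallsetminus \R$, at least one of $q_1, q_2, q_3$ is nonzero, so $\abs{\q} = \sqrt{q_1^2 + q_2^2 + q_3^2} > 0$. I would then set
\[
x = q_0, \qquad y = \abs{\q}, \qquad I_\qu = \frac{1}{y}\,\q.
\]
A direct check shows that $I_\qu = (q_1/y)\, i + (q_2/y)\, j + (q_3/y)\, k$ satisfies $(q_1/y)^2 + (q_2/y)^2 + (q_3/y)^2 = 1$, hence $I_\qu \in \mathbb{S}$, and (using that the real scalar $y$ commutes with quaternions) $x + I_\qu y = q_0 + \q = \qu$ with $y > 0$. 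This settles existence.

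For uniqueness, the key observation is that every $I \in \mathbb{S}$ is a \emph{pure} quaternion (zero scalar part) with $I^2 = -1$: expanding $I^2 = (x_1 i + x_2 j + x_3 k)^2$ and using $ij = -ji$, $jk = -kj$, $ki = -ik$, the mixed terms cancel in pairs and one is left with $-(x_1^2 + x_2^2 + x_3^2) = -1$. Consequently, in any representation $\qu = x + I y$ with $I \in \mathbb{S}$ the scalar part of $\qu$ equals $x$ while its vector part equals $I y$. Thus, if $\qu = x + I y = x' + I' y'$ are two such representations with $y, y' > 0$ and $I, I' \in \mathbb{S}$, equating scalar parts gives $x = x'$ and equating vector parts gives $I y = I' y'$. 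Taking real norms and using multiplicativity of the norm together with $\abs{I} = \abs{I'} = 1$ yields $y = \abs{I y} = \abs{I' y'} = y'$; since $y = y' > 0$ we may divide to conclude $I = I'$.

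I do not expect a genuine obstacle here: the argument is a direct consequence of the quaternion multiplication table and the decomposition $\qu = (q_0, \q)$ recalled above. The only point that requires attention is that the positivity requirement $y > 0$ is precisely what removes the residual sign ambiguity $(y, I_\qu) \mapsto (-y, -I_\qu)$, which would otherwise furnish a second representation of the same quaternion; it is this normalization that makes the triple $(x, y, I_\qu)$ unique rather than merely determined up to sign.
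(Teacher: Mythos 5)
Your proof is correct and complete: the scalar/vector splitting $\qu=q_0+\q$ gives existence, and the observation that every $I\in\mathbb{S}$ is a pure unit quaternion, combined with the normalization $y>0$, pins down $x$, $y$, and $I_\qu$ uniquely. The paper states this proposition without proof, citing \cite{Gen1}, and your argument is exactly the standard one from that source, so there is no divergence to report.
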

For every quaternion $I\in\mathbb{S}$, the complex line $\C_I=\mathbb{R}+I\mathbb{R}$ passing through the origin, and containing $1$ and $I$, is called a quaternion slice. Thereby, we can see that
\begin{equation}\label{Eq1}
\quat=\bigcup_{I\in\mathbb{S}}\C_I\quad\text{and}\quad\bigcap_{I\in\mathbb{S}} \C_I=\mathbb{R}
\end{equation}
One can also easily see that $\C_I\subset \quat$ is commutative, while, elements from two different quaternion slices, $\C_I$ and $\C_J$ (for $I, J\in\mathbb{S}$ with $I\not=J$), do not necessarily commute.
%%%%%%%%%%%%%%%%%%%%%%%%%%%%%%%%%%%%%%%%%%%%%%%%%%%%%%
\subsection{Quaternionic Hilbert spaces}
In this subsection we introduce right quaternionic Hilbert spaces. For details we refer the reader to \cite{Ad}. We also define the Hilbert space of square integrable functions on quaternions based on \cite{Vis, Gu}.
\subsubsection{Right Quaternionic Hilbert Space}
Let $V_{\quat}^{R}$ be a linear vector space under right multiplication by quaternionic scalars (again $\quat$ standing for the field of quaternions).  For $f,g,h\in V_{\quat}^{R}$ and $\qu\in \quat$, the inner product
$$\langle\cdot\mid\cdot\rangle:V_{\quat}^{R}\times V_{\quat}^{R}\longrightarrow \quat$$
satisfies the following properties
\begin{enumerate}
	\item[(i)]
	$\overline{\langle f\mid g\rangle}=\langle g\mid f\rangle$
	\item[(ii)]
	$\|f\|^{2}=\langle f\mid f\rangle>0$ unless $f=0$, a real norm
	\item[(iii)]
	$\langle f\mid g+h\rangle=\langle f\mid g\rangle+\langle f\mid h\rangle$
	\item[(iv)]
	$\langle f\mid g\qu\rangle=\langle f\mid g\rangle\qu$
	\item[(v)]
	$\langle f\qu\mid g\rangle=\overline{\qu}\langle f\mid g\rangle$
\end{enumerate}
where $\overline{\qu}$ stands for the quaternionic conjugate. We assume that the
space $V_{\quat}^{R}$ is complete under the norm given above. Then,  together with $\langle\cdot\mid\cdot\rangle$ this defines a right quaternionic Hilbert space, which we shall assume to be separable. Quaternionic Hilbert spaces share most of the standard properties of complex Hilbert spaces. In particular, the Cauchy-Schwartz inequality holds on quaternionic Hilbert spaces as well as the Riesz representation theorem for their duals.  Thus, the Dirac bra-ket notation
can be adapted to quaternionic Hilbert spaces:
$$\mid f\qu\rangle=\mid f\rangle\qu,\hspace{1cm}\langle f\qu\mid=\overline{\qu}\langle f\mid\;, $$
for a right quaternionic Hilbert space, with $\vert f\rangle$ denoting the vector $f$ and $\langle f\vert$ its dual vector. Similarly the left quaternionic Hilbert space $V_{\quat}^{L}$ can also be described, see for more detail \cite{Ad,MuTh,Thi1}.
The field of quaternions $\quat$ itself can be turned into a left quaternionic Hilbert space by defining the inner product $\langle \qu \mid \qu^\prime \rangle = \qu \qu^{\prime\dag} = \qu\overline{\qu^\prime}$ or into a right quaternionic Hilbert space with  $\langle \qu \mid \qu^\prime \rangle = \qu^\dag \qu^\prime = \overline{\qu}\qu^\prime$. Further note that, due to the non-commutativity of quaternions the sum
$\sum_{m=0}^{\infty}\pu^m\qu^m/m!$
cannot be written as $\text{exp}(\pu\qu).$ However, in any Hilbert space the norm convergence implies the convergence of the series and
$\sum_{m=0}^{\infty}\left\vert \pu^m\qu^m/m!\right\vert
\leq e^{|\pu||\qu|},$ therefore $\sum_{m=0}^{\infty}\pu^m\qu^m/m!=e^{\pu\qu}_*$ converges.
%%%%%%%%%%%%%%%%%%%%%%%%%%%%%%%%%%%%%%%%%%%%%%%%%%%%%%
\subsubsection{Quaternionic Hilbert Spaces of Square Integrable Functions}
Let $(X, \mu)$ be a measure space and $\quat$  the field of quaternions, then
$$L^2_\quat(X, d\mu)=\left\{f:X\rightarrow \quat \left|  \int_X|f(x)|^2d\mu(x)<\infty \right.\right\}$$
\noindent
is a right quaternionic Hilbert space which is denoted by $L^2_\quat(X,\mu)$, with the (right) scalar product
\begin{equation}
\langle f \mid g\rangle =\int_X\overline{ f(x)}{g(x)} d\mu(x),
\label{left-sc-prod}
\end{equation}
where $\overline{f(x)}$ is the quaternionic conjugate of $f(x)$, and (right)  scalar multiplication $f\an, \; \an\in \quat,$ with $(f\an)(\qu) = f(\qu)\an$ (see \cite{Gu,Vis} for details). Similarly, one could define a left quaternionic Hilbert space of square integrable functions.
\subsection{Left Scalar Multiplications on $V_{\quat}^{R}$.}
We shall extract the definition and some properties of left scalar multiples of vectors on $V_{\quat}^R$ from \cite{ghimorper} as needed for the development of the manuscript. The left scalar multiple of vectors on a right quaternionic Hilbert space is an extremely non-canonical operation associated with a choice of preferred Hilbert basis. Now the Hilbert space $V_{\quat}^{R}$ has a Hilbert basis
\begin{equation}\label{b1}
\mathcal{O}=\{\varphi_{k}\,\mid\,k\in N\},
\end{equation}
where $N$ is a countable index set.
The left scalar multiplication `$\cdot$' on $V_{\quat}^{R}$ induced by $\mathcal{O}$ is defined as the map $\quat\times V_{\quat}^{R}\ni(\qu,\phi)\longmapsto \qu\cdot\phi\in V_{\quat}^{R}$ given by
\begin{equation}\label{LPro}
\qu\cdot\phi:=\sum_{k\in N}\varphi_{k}\qu\langle \varphi_{k}\mid \phi\rangle,
\end{equation}
for all $(\qu,\phi)\in\quat\times V_{\quat}^{R}$. Since all left multiplications are made with respect to some basis, assume that the basis $\mathcal{O}$ given by (\ref{b1}) is fixed all over the paper.
\begin{proposition}\cite{ghimorper}\label{lft_mul}
	The left product defined in (\ref{LPro}) satisfies the following properties. For every $\phi,\psi\in V_{\quat}^{R}$ and $\pu,\qu\in\quat$,
	\begin{itemize}
		\item[(a)] $\qu\cdot(\phi+\psi)=\qu\cdot\phi+\qu\cdot\psi$ and $\qu\cdot(\phi\pu)=(\qu\cdot\phi)\pu$.
		\item[(b)] $\|\qu\cdot\phi\|=|\qu|\|\phi\|$.
		\item[(c)] $\qu\cdot(\pu\cdot\phi)=(\qu\pu\cdot\phi)$.
		\item[(d)] $\langle\overline{\qu}\cdot\phi\mid\psi\rangle
		=\langle\phi\mid\qu\cdot\psi\rangle$.
		\item[(e)] $r\cdot\phi=\phi r$, for all $r\in \mathbb{R}$.
		\item[(f)] $\qu\cdot\varphi_{k}=\varphi_{k}\qu$, for all $k\in N$.
	\end{itemize}
\end{proposition}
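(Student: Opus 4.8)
The plan is to derive all six identities directly from the defining series (\ref{LPro}) together with two standard facts about the Hilbert basis $\mathcal{O}$ of (\ref{b1}): the orthonormality relation $\langle\varphi_k\mid\varphi_l\rangle=\delta_{kl}$, and the Fourier expansion $\phi=\sum_{k\in N}\varphi_k\langle\varphi_k\mid\phi\rangle$ valid for every $\phi\in V_{\quat}^{R}$. Before touching the individual items I would record the two ``continuity'' facts that legitimize the term-by-term manipulations: right multiplication by a fixed quaternion $\pu$ satisfies $\|\phi\pu\|=\|\phi\||\pu|$ and is therefore bounded, so it commutes with the norm-convergent sum; and each slot of the inner product is continuous, so $\langle\cdot\mid\cdot\rangle$ may be exchanged with $\sum_{k}$. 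With these in hand every property becomes a short computation, and the only real bookkeeping is keeping the non-commuting factors in the correct left/right order.

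I would dispose of the purely algebraic items (a), (e), (f) first. For (a), additivity in $\phi$ is immediate from axiom (iii) applied inside (\ref{LPro}), while $\qu\cdot(\phi\pu)=(\qu\cdot\phi)\pu$ follows from axiom (iv), namely $\langle\varphi_k\mid\phi\pu\rangle=\langle\varphi_k\mid\phi\rangle\pu$, after pulling the common right factor $\pu$ out of the convergent series. For (e), since $r\in\mathbb{R}$ commutes with every quaternion one moves $r$ past $\langle\varphi_k\mid\phi\rangle$ to the right, and then (\ref{LPro}) collapses to $\big(\sum_k\varphi_k\langle\varphi_k\mid\phi\rangle\big)r=\phi r$ by the Fourier expansion. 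For (f), substituting $\phi=\varphi_l$ into (\ref{LPro}) and using orthonormality kills every term except $k=l$, leaving $\varphi_l\qu$.

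Next I would handle (c) and (d), the two genuinely quaternionic identities. For (c) the key intermediate step is to compute $\langle\varphi_l\mid\pu\cdot\phi\rangle$ by inserting the series for $\pu\cdot\phi$, pulling $\langle\varphi_l\mid\cdot\rangle$ through the sum, and collapsing by orthonormality to $\pu\langle\varphi_l\mid\phi\rangle$; feeding this back into (\ref{LPro}) for $\qu\cdot(\pu\cdot\phi)$ produces $\sum_l\varphi_l(\qu\pu)\langle\varphi_l\mid\phi\rangle=(\qu\pu)\cdot\phi$, where associativity of quaternion multiplication does the work. For (d) I would expand both sides in the basis: the left side, via axiom (v) and the conjugation-reversal rule $\overline{\overline{\qu}\,c}=\overline{c}\,\qu$, becomes $\sum_k\overline{\langle\varphi_k\mid\phi\rangle}\,\qu\,\langle\varphi_k\mid\psi\rangle$, while the right side, via axioms (iii)--(iv) together with $\langle\phi\mid\varphi_k\rangle=\overline{\langle\varphi_k\mid\phi\rangle}$ from axiom (i), yields the identical expression. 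The care needed here is entirely in the ordering of the non-commuting factors.

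Finally, (b) is the one item that uses a quantitative fact rather than pure algebra, and I expect its justification to be the main obstacle. Writing $\qu\cdot\phi=\sum_k\varphi_k\big(\qu\langle\varphi_k\mid\phi\rangle\big)$ exhibits it as a Fourier expansion with coefficients $\qu\langle\varphi_k\mid\phi\rangle$, so Parseval's identity gives $\|\qu\cdot\phi\|^2=\sum_k|\qu\langle\varphi_k\mid\phi\rangle|^2$; multiplicativity of the quaternionic modulus, $|\qu\,c|=|\qu||c|$, factors out $|\qu|^2$, and a second application of Parseval to $\phi$ itself finishes the computation. The subtle points are that Parseval must be secured in the right-quaternionic setting (which itself follows from orthonormality and axioms (iv)--(v)) and that the rearranged series of nonnegative reals is absolutely convergent, so no reordering issue arises; once these are in place, taking square roots gives $\|\qu\cdot\phi\|=|\qu|\|\phi\|$.
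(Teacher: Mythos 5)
The paper offers no proof of this proposition at all---it is quoted verbatim from \cite{ghimorper}---and your blind argument is correct and is essentially the standard verification given in that reference: expand everything in the fixed Hilbert basis, use orthonormality, the right-linearity axioms (iii)--(v), multiplicativity of the quaternionic modulus, and Parseval to justify (b), with continuity of the inner product and of right scalar multiplication licensing the term-by-term manipulations. Your handling of the non-commutative bookkeeping in (c) and (d) (collapsing $\langle\varphi_l\mid\pu\cdot\phi\rangle$ to $\pu\langle\varphi_l\mid\phi\rangle$, and the reversal rule $\overline{\overline{\qu}\,c}=\overline{c}\,\qu$) is exactly right, so there is nothing to add.
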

\begin{remark}%\label{Rem1}
 It is immediate that $(\pu+\qu)\cdot\phi=\pu\cdot\phi+\qu\cdot\phi$, for all $\pu,\qu\in\quat$ and $\phi\in V_{\quat}^{R}$. Moreover, with the aid of (b) in above Proposition (\ref{lft_mul}), we can have, if $\{\phi_n\}$ in $V_\quat^R$ such that $\phi_n\longrightarrow\phi$, then $\qu\cdot\phi_n\longrightarrow\qu\cdot\phi$. Also if $\sum_{n}\phi_n$ is a convergent sequence in $V_\quat^R$, then $\qu\cdot(\sum_{n}\phi_n)=\sum_{n}\qu\cdot\phi_n$.
\end{remark}
Furthermore, the quaternionic scalar multiplication of $\quat$-linear operators is also defined in \cite{ghimorper}. For any fixed $\qu\in\quat$ and a given right $\quat$-linear operator $A:\D(A)\longrightarrow V_{\quat}^{R}$, the left scalar multiplication `$\cdot$' of $A$ is defined as a map $\qu \cdot A:\D(A)\longrightarrow V_{\quat}^{R}$ by the setting
\begin{equation}\label{lft_mul-op}
(\qu\cdot A)\phi:=\qu\cdot (A\phi)=\sum_{k\in N}\varphi_{k}\qu\langle \varphi_{k}\mid A\phi\rangle,
\end{equation}
for all $\phi\in \D(A)$. It is straightforward that $\qu A$ is a right $\quat$-linear operator. If $\qu\cdot\phi\in \D(A)$, for all $\phi\in \D(A)$, one can define right scalar multiplication `$\cdot$' of the right $\quat$-linear operator $A:\D(A)\longrightarrow V_{\quat}^{R}$ as a map $ A\cdot\qu:\D(A)\longrightarrow V_{\quat}^{R}$ by the setting
\begin{equation}\label{rgt_mul-op}
(A\cdot\qu )\phi:=A(\qu\cdot \phi),
\end{equation}
for all $\phi\in \D(A)$. It is also right $\quat$-linear operator. One can easily obtain that, if $\qu\cdot\phi\in \D(A)$, for all $\phi\in \D(A)$ and $\D(A)$ is dense in $V_{\quat}^{R}$, then
\begin{equation}\label{sc_mul_aj-op}
(\qu\cdot A)^{\dagger}=A^{\dagger}\cdot\overline{\qu}~\mbox{~and~}~
(A\cdot\qu)^{\dagger}=\overline{\qu}\cdot A^{\dagger}.
\end{equation}
%%%%%%%%%%%%%%%%%%%%%%%%%%%%%%%%%%%%%%%%%%%%%%%%%%%%%%

\section{Bargmann space of regular and anti-regular functions}
The Bargmann space of left regular functions $\HI^B_r$ is a closed subspace of the right Hilbert space $L_{\quat}(\quat, d\zeta(r, \theta, \phi, \psi))$, where $d\zeta(r, \theta, \phi, \psi)=\frac{1}{4\pi} e^{-r^2}\sin{\phi} dr d\theta d\phi d\psi$. An orthonormal basis of this space is given by the monomials (which are both left and right regular)
$$\Phi_n(\qu)=\frac{\qu^n}{\sqrt{n!}};\quad n=0,1,2,\cdots.$$
Similarly the vectors
$$\overline{\Phi_n(\qu)}=\Phi_n(\oqu)=\frac{\oqu^n}{\sqrt{n!}};\quad n=0,1,2,\cdots$$
form an orthonormal basis in the corresponding space of right anti-regular functions $\HI^B_{ar}$. There is also an associated reproducing kernel
$$K_B(\qu,\overline{\pu})=\sum_{n=0}^{\infty}\Phi_n(\qu)\overline{\Phi_n(\pu)}=e_{\star}^{\qu\overline{\pu}}$$
 see \cite{Thi1, Al} for details.

\subsection{Coherent states on right quaternionic Hilbert spaces}\label{CSLQH}
The main content of this section is extracted from \cite{Thi2} as needed here. For an enhanced explanation we refer the reader to \cite{Thi2}. In \cite{Thi2} the authors have defined coherent states on $V_{\quat}^{R}$ and $V_{\quat}^{L}$, and also established the normalization and resolution of the identities for each of them.\\

On the Bargmann space $\HI^B_r$, the normalized canonical coherent states are
\begin{equation}\label{CCS}
\eta_{\qu}=\frac{1}{\sqrt{K_B(\qu, \qu)}}\sum_{n=0}^{\infty}\Phi_n\Phi_n(\oqu)=e^{-\frac{|\qu|^2}{2}}\sum_{n=0}^{\infty}\Phi_n\frac{\qu^n}{n!}
=e^{-\frac{|\qu|^2}{2}}\sum_{n=0}^{\infty}\frac{\qu^n}{n!}\cdot\Phi_n,
\end{equation}
where we have used the fact in Proposition \ref{lft_mul} (f), with a resolution of the identity
\begin{equation}\label{resolution}
\int_{\quat}|\eta_{\qu}\rangle\langle\eta_{\qu}|d\zeta(r, \theta, \phi, \psi)=I_{\HI^B_r}.
\end{equation}
Now take the corresponding annihilation and creation operators as
\begin{eqnarray*}
\as\Phi_0&=&0\\
\as\Phi_n&=&\sqrt{n}\Phi_{n-1}\\
\asd\Phi_n&=&\sqrt{n+1}\Phi_{n+1}.
\end{eqnarray*}
The operators can be taken as $\asd=\qu$ (multiplication by $\qu$) and $\as=\partial_s$ (left slice regular derivative), see \cite{Thi1, MuTh}. It is also not difficult to see that $(\asd)^{\dagger}=\as$, $[\as,\asd]=I_{\HI^B_r}$ and $\as\eta_{\qu}=\qu\cdot\eta_{\qu}$ (see also \cite{MTI}).\\

First of all, as in the complex quantum mechanics, all the operators considered here are unbounded operators. However, the operators act as $\HIB\ni|\phi\rangle\mapsto |\psi\rangle\in\HIB$, that is, the domain and the range of the operators are dense subsets of $\HI$. Furthermore, the Hilbert space, $\HIB$, can be taken as a space right-spanned by the regular functions $\{\frac{\qu^m}{m!}~~|~~m\in\mathbb{N}\}$ or anti-regular functions $\{\frac{\oqu^m}{m!}~~|~~m\in\mathbb{N}\}$ over $\quat$ (counterparts of holomorphic and anti-holomorphic functions). In this respect, the operators considered here do not have any domain problems as for the operators in the complex quantum mechanics. Therefore, we can use the operator tools of complex quantum mechanics, in particular, the Baker-Campbell-Hausdorff formula (for a complex argument along these lines see chapter 14 in \cite{Brian}).

The following Proposition demonstrate commutativity between quaternions and the right linear operators $\as$ and $\asd$. Further, it plays an important role.
\begin{proposition}\label{xAq}\cite{MTI}
	For each $\mathfrak{q}\in\quat$, we have $\mathfrak{q}\cdot {\mathsf{a}} ={\mathsf{a}} \cdot\mathfrak{q}$ and $\mathfrak{q}\cdot {{\mathsf{a}}^\dagger} ={{\mathsf{a}}^\dagger} \cdot\mathfrak{q}$.
\end{proposition}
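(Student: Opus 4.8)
The plan is to verify both identities on the fixed Hilbert basis $\mathcal{O}=\{\Phi_n\mid n\in\mathbb{N}\}$ of $\HIB$ and then to propagate the equality to the whole (dense) domain by right $\quat$-linearity and the continuity of the left product. The first thing I would do is unwind the two definitions involved. By \eqref{lft_mul-op} the operator $\qu\cdot\as$ acts by $(\qu\cdot\as)\phi=\qu\cdot(\as\phi)$, whereas by \eqref{rgt_mul-op} the operator $\as\cdot\qu$ acts by $(\as\cdot\qu)\phi=\as(\qu\cdot\phi)$; hence the claim $\qu\cdot\as=\as\cdot\qu$ is exactly the assertion that $\qu\cdot(\as\phi)=\as(\qu\cdot\phi)$ on the domain, and likewise for $\asd$. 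Before comparing, I would check that $\as\cdot\qu$ is even well defined, i.e. that $\qu\cdot\phi\in\D(\as)$ whenever $\phi\in\D(\as)$: this holds because Proposition \ref{lft_mul}(f) gives $\qu\cdot\Phi_n=\Phi_n\qu$, so left multiplication by $\qu$ carries the basis, and hence the dense domain spanned by it, into itself.

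Next I would test the identity on a basis vector $\Phi_n$. For the annihilation operator the left-hand side is
\begin{equation*}
\qu\cdot(\as\Phi_n)=\qu\cdot(\sqrt{n}\,\Phi_{n-1})=\sqrt{n}\,(\qu\cdot\Phi_{n-1})=\sqrt{n}\,\Phi_{n-1}\qu,
\end{equation*}
where the middle step uses that the real scalar $\sqrt{n}$ passes through the left product (Proposition \ref{lft_mul}(a),(e)) and the last step is again property (f). For the right-hand side, property (f) first converts $\qu\cdot\Phi_n$ into $\Phi_n\qu$, after which the right $\quat$-linearity of $\as$ yields
\begin{equation*}
\as(\qu\cdot\Phi_n)=\as(\Phi_n\qu)=(\as\Phi_n)\qu=\sqrt{n}\,\Phi_{n-1}\qu.
\end{equation*}
The two expressions coincide, so $\qu\cdot\as$ and $\as\cdot\qu$ agree on each $\Phi_n$. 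The computation for $\asd$ is word-for-word the same with $\sqrt{n}\,\Phi_{n-1}$ replaced by $\sqrt{n+1}\,\Phi_{n+1}$.

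Finally I would extend from the basis to the whole domain. Writing an arbitrary domain element as a convergent combination $\phi=\sum_n\Phi_n c_n$ with $c_n\in\quat$, the operators $\as,\asd$ are right $\quat$-linear, while by the Remark following Proposition \ref{lft_mul} left multiplication by $\qu$ commutes with convergent sums and preserves right scalar factors; hence the term-by-term agreement on the $\Phi_n$ carries over to $\phi$. I do not anticipate a genuine obstacle: the entire argument reduces to the single algebraic fact (f), $\qu\cdot\Phi_n=\Phi_n\qu$, which turns every left action into an ordinary right multiplication that $\as$ and $\asd$ commute past by right linearity. The only point demanding care is the bookkeeping of which quaternionic factors act on the right (and are therefore inert under $\as,\asd$) versus the noncommutative left action; keeping that distinction straight is precisely what makes the two sides match.
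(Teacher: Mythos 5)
Your proof is correct: reducing both $\qu\cdot\as=\as\cdot\qu$ and $\qu\cdot\asd=\asd\cdot\qu$ to Proposition \ref{lft_mul}(f) on the fixed basis $\{\Phi_n\}$ and then extending by right $\quat$-linearity and the convergence properties in the Remark is exactly the standard argument, and it matches what the cited source \cite{MTI} does; the present paper simply states the proposition with that citation and gives no proof of its own. Your extra care about domain preservation (that $\qu\cdot\phi\in\D(\as)$, which follows since $|\qu c_n|=|\qu||c_n|$ keeps the coefficient sequence in the right summability class) is a point the paper glosses over, so nothing is missing.
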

%%%%%%%%%%%%%%%%%%%%%%%%%%%%%%%%%%%%%%%%%%%%%%%%%%%%%%%%%%%%%%%%%%%%%%%%%%%%%%%%%%%%%%%
\subsection{The right quaternionic displacement operator}
On a right quaternionic Hilbert space with a right multiplication we cannot have a displacement operator as a representation for the representation space $\HI^B_r$. This fact has been indicated twice in the literature, in \cite{Ad2} while studying quaternionic Perelomov type CS and in \cite{Thi2} when the authors studied the quaternionic canonical CS. However, in \cite{MTI}, we have shown that if we consider a right quaternionic Hilbert space with a left multiplication on it, see Eq. (\ref{lft_mul-op}), we can have a displacement operator as a representation for the representation space $\HI^B_r$ with all the desired properties. We shall extract some materials from \cite{MTI} as needed here.\\

\begin{proposition}\label{dis}\cite{MTI} The right quaternionic displacement operator $\D(\qu)=e^{\qu\cdot\asd-\oqu\cdot\as}$ is a unitary, square integrable and irreducible representation of the representation space $\HI^B_r$.
\end{proposition}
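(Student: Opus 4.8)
The plan is to verify the three properties in turn, in each case transporting the complex-case argument to the quaternionic setting by means of the commutation relations of Proposition~\ref{xAq} and the adjunction rule~(\ref{sc_mul_aj-op}). For unitarity I would first check that the generator $\qu\cdot\asd-\oqu\cdot\as$ is anti-self-adjoint. By~(\ref{sc_mul_aj-op}) and $(\asd)^{\dagger}=\as$ one has $(\qu\cdot\asd)^{\dagger}=\as\cdot\oqu$ and $(\oqu\cdot\as)^{\dagger}=\asd\cdot\qu$, and Proposition~\ref{xAq} lets me rewrite $\as\cdot\oqu=\oqu\cdot\as$ and $\asd\cdot\qu=\qu\cdot\asd$, whence $(\qu\cdot\asd-\oqu\cdot\as)^{\dagger}=-(\qu\cdot\asd-\oqu\cdot\as)$. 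Since the defining series converges in norm (as recalled in the preliminaries) and the generator is skew-adjoint, $\D(\qu)=e^{\qu\cdot\asd-\oqu\cdot\as}$ is unitary.

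For square integrability the essential step is to put $\D(\qu)$ into disentangled (normal-ordered) form and to act on the vacuum $\Phi_0$. Setting $A=\qu\cdot\asd$ and $B=-\oqu\cdot\as$, Proposition~\ref{xAq} together with the reality of $\qu\oqu=|\qu|^2$ yields $[A,B]=|\qu|^2 I_{\HIB}$, a real central element; this is precisely what is needed for the Baker--Campbell--Hausdorff identity, whose use is legitimate here because the operators act without domain obstructions on the span of the regular monomials. One thus obtains $\D(\qu)=e^{-|\qu|^2/2}\,e_{*}^{\qu\cdot\asd}\,e_{*}^{-\oqu\cdot\as}$. Since $\as\Phi_0=0$ the last factor fixes $\Phi_0$, while expanding $e_{*}^{\qu\cdot\asd}\Phi_0$ with $(\qu\cdot\asd)^n\Phi_0=\qu^n\cdot(\asd)^n\Phi_0$ and Proposition~\ref{lft_mul}(f) identifies $\D(\qu)\Phi_0$ with the canonical coherent state $\eta_\qu$ of~(\ref{CCS}). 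Square integrability is then immediate: the vacuum $\Phi_0=\eta_0$ is an admissible fiducial vector, which is exactly the content of the resolution of the identity~(\ref{resolution}).

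For irreducibility I would show that the algebra generated by the operators $\D(\qu)$ acts irreducibly, equivalently that the commutant is trivial. Restricting $\qu$ to real and to purely imaginary values exhibits the skew-adjoint generators $\asd-\as$ and $i\cdot(\asd+\as)$ (and the same with $j,k$). The decisive observation is the commutator $[\asd-\as,\;i\cdot(\asd+\as)]=-2\,i\cdot$, computed using $[\as,\asd]=I_{\HIB}$ and the commutativity of $i\cdot$ with $\as,\asd$ from Proposition~\ref{xAq}; together with its $j,k$ analogues this shows that the generated algebra contains every left multiplication $\qu\cdot$ and, consequently, $\as$ and $\asd$ separately. A closed subspace invariant under all $\D(\qu)$ is therefore stable under $\as$, $\asd$ and all left multiplications; starting from any nonzero vector, repeated application of $\as$ produces a nonzero multiple of $\Phi_0$, then $\asd$ generates every $\Phi_n$, and stability under left multiplication together with closedness forces the subspace to be all of $\HIB$. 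Hence $\D$ is irreducible.

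The main obstacle I anticipate is the pervasive noncommutativity of $\quat$, which enters at two sensitive points. First, the disentangling identity is only available because $[A,B]=|\qu|^2 I_{\HIB}$ turns out to be real and central; this hinges entirely on Proposition~\ref{xAq}, and without it the two exponentials could not be separated. Second, irreducibility must be read in the quaternionic sense: rather than trying to reduce the commutant to complex scalars, I rely on the fact that the commutators of the generators regenerate the left multiplications $i\cdot,j\cdot,k\cdot$, so that the whole noncommutative scalar action is recovered from the representation itself---this is the step where the quaternionic structure, which in other parts of the theory obstructs the analogy with the complex case, actually works in our favour.
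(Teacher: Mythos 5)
Before comparing: the paper itself contains \emph{no} proof of this proposition; it is imported verbatim from \cite{MTI}, and the surrounding text only records its companions (the normal/anti-normal orderings and $\eta_\qu=\D(\qu)\Phi_0$) as further citations. Measured against the toolkit the paper assembles, your unitarity and square-integrability arguments are correct and are exactly the intended ones. The anti-self-adjointness of $\qu\cdot\asd-\oqu\cdot\as$ via (\ref{sc_mul_aj-op}) and Proposition \ref{xAq} replicates, word for word, the computation the paper performs for the squeeze generator $A=\pu\cdot(\asd)^2-\opu\cdot\as^2$; your observation that $[\qu\cdot\asd,\,-\oqu\cdot\as]=|\qu|^2 I_{\HIB}$ is real and central is precisely what underwrites the normal ordering $\D(\qu)=e^{-|\qu|^2/2}e^{\qu\cdot\asd}e^{-\oqu\cdot\as}$ stated in the paper immediately after this proposition; and deducing square integrability from $\D(\qu)\Phi_0=\eta_\qu$ (Eq.~(\ref{coh_dis})) together with the resolution of the identity (\ref{resolution}), i.e.\ $\int_\quat|\langle\eta_\qu|\psi\rangle|^2\,d\zeta=\|\psi\|^2$ so that $\Phi_0$ is admissible, is the standard coherent-state argument the paper presupposes. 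Your remark that the commutator $[\asd-\as,\;\tau\cdot(\asd+\as)]=-2\,\tau\cdot I_{\HIB}$ regenerates the left multiplications $i\cdot,j\cdot,k\cdot$ from within the representation is also correct and is a genuinely quaternionic touch.

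The gap is in the last stretch of the irreducibility argument. A minor issue first: passing from invariance of a closed subspace $M$ under the unitaries $\D(\qu)$ to stability under the unbounded operators $\as$, $\asd$ and the left multiplications needs Stone's theorem applied to the one-parameter groups $t\mapsto\D(t\qu)$ on a common invariant core (the right span of the $\Phi_n$); under the paper's standing conventions this is forgivable. The step that actually fails as written is ``repeated application of $\as$ produces a nonzero multiple of $\Phi_0$'': for a general $\phi=\sum_n\Phi_n c_n\in M$ with $c_N$ its first nonvanishing coefficient, $\as^N\phi$ has a nonzero component along $\Phi_0$ but is a multiple of $\Phi_0$ only when $\phi$ is a finite combination; no finite word in $\as,\asd$ sends a generic vector to the vacuum. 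To close the argument you need one more device: for instance, show that the orthogonal projection $P_M$ commutes with all $\D(\qu)$, hence with $\as$ and $\asd$, hence with the number operator $\asd\as$ and with its spectral projections, among which is $|\Phi_0\rangle\langle\Phi_0|$; then $\Phi_0\in M$ or $\Phi_0\in M^{\perp}$, and since the coherent states $\eta_\qu=\D(\qu)\Phi_0$ are total in $\HIB$ by (\ref{resolution}), the subspace containing $\Phi_0$ must be all of $\HIB$. Alternatively, run a Schur-type commutant argument using that $\ker\as$ is the right-quaternionic line through $\Phi_0$, so an intertwiner $T$ satisfies $T\Phi_0=\Phi_0\lambda$ with $\lambda\in\quat$. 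With either repair your proof stands; without it, the final step does not go through for infinite superpositions.
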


The following proposition discusses two versions for the displacement operator, the complex version is commonly used in complex quantum mechanics without hesitation, namely normal and anti-normal orderings.
\begin{proposition}\cite{MTI}
	The displacement operator $\D(\qu)$ satisfies
	\begin{itemize}
		\item [(i)] the normal ordering property:  $\D(\qu)=e^{-\frac{|\qu|^2}{2}}e^{\qu\cdot\mathsf{a}^\dagger}e^{-\oqu\cdot\mathsf{a}}$,
		\item [(ii)] the anti-normal ordering property:  $\D(\qu)=e^{\frac{|\qu|^2}{2}}e^{-\oqu\cdot\mathsf{a}}e^{\qu\cdot\mathsf{a}^\dagger}$.
	\end{itemize}
	Furthermore, the coherent state $\eta_{\qu}$ is generated from the ground state $\Phi_0$ by the displacement operator $\D(\qu)$,
	\begin{equation}\label{coh_dis}
	\eta_{\qu}=\D(\qu)\Phi_0.
	\end{equation}
\end{proposition}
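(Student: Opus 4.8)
The plan is to deduce both ordering identities from the Baker--Campbell--Hausdorff (BCH) formula, specialised to the situation in which the commutator of the two exponents is a real multiple of the identity and hence central. Write $A=\qu\cdot\asd$ and $B=-\oqu\cdot\as$, so that $\D(\qu)=e^{A+B}$. The first and decisive step is to compute $[A,B]$ and to verify that it lies in the centre of the operator algebra generated by $\as$, $\asd$ and the left multiplications; once that is in hand, both (i) and (ii) follow formally.

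To evaluate the commutator I would use three ingredients already available: Proposition \ref{xAq}, which says that $\as$ and $\asd$ commute with every left multiplication, i.e. $\asd(\qu\cdot\phi)=\qu\cdot(\asd\phi)$ and $\as(\qu\cdot\phi)=\qu\cdot(\as\phi)$; the composition rule of Proposition \ref{lft_mul}(c), $\qu\cdot(\pu\cdot\phi)=(\qu\pu)\cdot\phi$; and the canonical commutation relation $[\as,\asd]=I$. Applying these,
\[
(\qu\cdot\asd)(\oqu\cdot\as)\phi=(\qu\oqu)\cdot(\asd\as\,\phi),\qquad (\oqu\cdot\as)(\qu\cdot\asd)\phi=(\oqu\qu)\cdot(\as\asd\,\phi),
\]
and since $\qu\oqu=\oqu\qu=\abs{\qu}^2$ is real (hence acts as an ordinary scalar by Proposition \ref{lft_mul}(e)), subtracting gives $[\qu\cdot\asd,\oqu\cdot\as]=-\abs{\qu}^2 I$, so that $[A,B]=\abs{\qu}^2 I$. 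This is the crucial point: although quaternions do not commute in general, only the pair $\qu,\oqu$ enters here, and their product is real and therefore central, exactly as $\abs{\alpha}^2$ is in the complex case.

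Because $[A,B]$ is central it commutes with both $A$ and $B$, so the reduced BCH identities $e^{A+B}=e^{A}e^{B}e^{-\frac12[A,B]}$ and $e^{A+B}=e^{B}e^{A}e^{+\frac12[A,B]}$ are available; their use is legitimate here since, as noted in the excerpt, the operators act on the dense span of the regular monomials and present no domain obstruction. Substituting $[A,B]=\abs{\qu}^2 I$ and pulling out the real scalar $e^{\mp\abs{\qu}^2/2}$ yields (i) and (ii) at once. I would take care to confirm that the standard power-series derivation of the reduced BCH formula transcribes verbatim, the only nonformal input being that $\qu\cdot$ and $\oqu\cdot$ commute with $\as,\asd$ and that their relevant product is real and central.

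For the generation of the coherent state I would apply the normal-ordered form (i) to $\Phi_0$. Since $\as\Phi_0=0$, every term beyond the zeroth in $e^{-\oqu\cdot\as}$ annihilates $\Phi_0$, whence $e^{-\oqu\cdot\as}\Phi_0=\Phi_0$; it remains to evaluate $e^{\qu\cdot\asd}\Phi_0$. A short induction, using $\asd(\qu\cdot\psi)=\qu\cdot(\asd\psi)$, the composition rule, and $\asd\Phi_n=\sqrt{n+1}\,\Phi_{n+1}$, gives $(\qu\cdot\asd)^n\Phi_0=\sqrt{n!}\;\qu^n\cdot\Phi_n$. Summing the exponential series and reinstating the prefactor produces $\D(\qu)\Phi_0=e^{-\abs{\qu}^2/2}\sum_{n=0}^{\infty}\frac{\qu^n}{\sqrt{n!}}\cdot\Phi_n$, which, after rewriting $\qu^n\cdot\Phi_n=\Phi_n\qu^n$ via Proposition \ref{lft_mul}(f), is precisely the coherent state \eqref{CCS}. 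I expect the main obstacle to be the careful justification of the operator-exponential manipulations in the left-multiplication setting: one must confirm that the centrality of $\abs{\qu}^2 I$ genuinely renders the noncommutativity of $\quat$ invisible to the computation, so that the complex arguments carry over unchanged; everything downstream is then routine.
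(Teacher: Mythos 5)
Your argument is correct and is essentially the paper's (i.e., \cite{MTI}'s) approach: the paper states this proposition without proof, importing it from \cite{MTI}, and the standard derivation there is exactly your route — use Proposition \ref{xAq} and the composition rule \ref{lft_mul}(c) to get $[\qu\cdot\asd,\,-\oqu\cdot\as]=\abs{\qu}^2 I$, which is real and hence central, then apply the reduced Baker--Campbell--Hausdorff identities to obtain (i) and (ii), and finally evaluate the normal-ordered form on $\Phi_0$ via $(\qu\cdot\asd)^n\Phi_0=\sqrt{n!}\,\qu^n\cdot\Phi_n$ to recover \eqref{CCS}. This is also precisely the pattern the present paper repeats for the squeeze operator (unitarity of $S(\pu)$ and the expansion $S(\pu)\Phi_0$), so nothing further is needed.
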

\begin{proposition}\label{dis}\cite{MTI} The displacement operator $\D(\qu)$ satisfies the following properties
$$(i)~~\D(\qu)^\dagger\as\D(\qu)=\as+\qu\quad (ii)~~\D(\qu)^\dagger\asd\D(\qu)=\asd+\oqu.$$
\end{proposition}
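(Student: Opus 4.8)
The plan is to conjugate $\as$ and $\asd$ by $\D(\qu)$ via the operator (Hadamard) form of the Baker--Campbell--Hausdorff formula
\[
e^{-X}Ye^{X}=Y-[X,Y]+\frac{1}{2!}\bigl[X,[X,Y]\bigr]-\cdots,
\]
which is legitimate here because, as noted above, $\as$ and $\asd$ act on the dense right-span of the regular monomials and present no domain obstruction, so the complex operator calculus applies verbatim. Put $X=\qu\cdot\asd-\oqu\cdot\as$, so that $\D(\qu)=e^{X}$. The first step is to record that $X$ is anti-hermitian: the adjoint rule (\ref{sc_mul_aj-op}) gives $(\qu\cdot\asd)^{\dagger}=(\asd)^{\dagger}\cdot\oqu=\as\cdot\oqu$ and $(\oqu\cdot\as)^{\dagger}=\as^{\dagger}\cdot\qu=\asd\cdot\qu$, whence, invoking Proposition~\ref{xAq} to pull the scalars across (so that $\as\cdot\oqu=\oqu\cdot\as$ and $\asd\cdot\qu=\qu\cdot\asd$), we obtain $X^{\dagger}=\oqu\cdot\as-\qu\cdot\asd=-X$. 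Therefore $\D(\qu)^{\dagger}=e^{-X}$, and part~(i) amounts to expanding $e^{-X}\as e^{X}$.

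The second step is the commutator computation. Using Proposition~\ref{xAq} in the form $\as(\qu\cdot\phi)=\qu\cdot(\as\phi)$ (and likewise for $\asd$), together with $[\as,\asd]=I_{\HIB}$, I expect
\[
[\qu\cdot\asd,\,\as]=-\,\qu\cdot I_{\HIB},\qquad [\oqu\cdot\as,\,\as]=0,
\]
so that $[X,\as]=-\,\qu\cdot I_{\HIB}$. The crucial point is then that the series \emph{terminates}: I would show $[X,\,\qu\cdot I_{\HIB}]=0$, hence $\bigl[X,[X,\as]\bigr]=0$ and all higher commutators vanish. This rests on the fact that the left-multiplication operator $\qu\cdot I_{\HIB}$ commutes with both $\qu\cdot\asd$ and $\oqu\cdot\as$; for the first this is associativity of left multiplication (Proposition~\ref{lft_mul}(c)), and for the second it additionally needs $\qu\oqu=\oqu\qu$ from the preliminaries. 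Consequently $\D(\qu)^{\dagger}\as\D(\qu)=\as-[X,\as]=\as+\qu\cdot I_{\HIB}$, which is exactly $\as+\qu$ in the notation of the statement.

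For part~(ii) the same scheme gives $[\qu\cdot\asd,\,\asd]=0$ and $[\oqu\cdot\as,\,\asd]=\oqu\cdot I_{\HIB}$, so $[X,\asd]=-\,\oqu\cdot I_{\HIB}$, and the analogous vanishing $\bigl[X,[X,\asd]\bigr]=0$ (again from Proposition~\ref{lft_mul}(c) and $\qu\oqu=\oqu\qu$) yields $\D(\qu)^{\dagger}\asd\D(\qu)=\asd+\oqu$.

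The routine part is the algebra of the single commutators; the delicate part, and the one I would write most carefully, is controlling the non-commutativity when verifying that the second-order commutators vanish. Unlike the complex case, where $[X,\alpha I]=0$ is automatic, here one must certify that left multiplication by $\qu$ genuinely commutes past $\oqu\cdot\as$, and this is precisely where $\qu\oqu=\oqu\qu$ is indispensable. Establishing this termination is what guarantees that no spurious higher-order quaternionic terms survive and that the displacement relations take the same clean form as in the complex setting.
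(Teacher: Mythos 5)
Your proof is correct: with $X=\qu\cdot\asd-\oqu\cdot\as$ one indeed gets $[X,\as]=-\qu\cdot I_{\HI^B_r}$ and $[X,\asd]=-\oqu\cdot I_{\HI^B_r}$ via Proposition \ref{xAq}, and the termination of the Hadamard series is exactly as you say, resting on Proposition \ref{lft_mul}(c) together with $\qu\oqu=\oqu\qu$. The paper states this proposition without proof, importing it from \cite{MTI}, and your BCH-plus-left-multiplication computation is precisely the route used there and mirrors the paper's own proof of the squeeze analogue, Proposition \ref{Psqop} (where the corresponding series does not terminate), so you have essentially reproduced the intended argument.
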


%%%%%%%%%%%%%%%%%%%%%%%%%%%%%%%%%%%%%%%%%%%%%%%%%%%%%%%
\subsection{The right quaternionic squeezed operator}
Same reason as for the displacement operator, with a right multiplication on a right quaternionic Hilbert space the squeezed operator cannot be unitary. However, it becomes unitary with a left multiplication on a right quaternionic Hilbert space.
\begin{lemma}
The operator $A=\pu\cdot (\asd)^2-\overline{\pu}\cdot\as^2$ is anti-hermitian.
\end{lemma}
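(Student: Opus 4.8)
The plan is to verify the defining identity of anti-hermiticity directly, i.e.\ to show $A^\dagger=-A$, by computing the adjoint of each summand and then reconciling the left/right placement of the quaternionic scalars. First I would apply the adjoint rule for left scalar multiplication of operators, $(\qu\cdot B)^\dagger=B^\dagger\cdot\overline{\qu}$ from Eq.~(\ref{sc_mul_aj-op}), to the two pieces of $A$. Since the adjoint reverses compositions and $(\asd)^\dagger=\as$, $\as^\dagger=\asd$, one has $((\asd)^2)^\dagger=\as^2$ and $(\as^2)^\dagger=(\asd)^2$. Using also $\overline{\opu}=\pu$, this gives $(\pu\cdot(\asd)^2)^\dagger=\as^2\cdot\opu$ and $(\opu\cdot\as^2)^\dagger=(\asd)^2\cdot\pu$, hence $A^\dagger=\as^2\cdot\opu-(\asd)^2\cdot\pu$. (The validity of (\ref{sc_mul_aj-op}) requires dense domains, which is guaranteed here because, as noted in the text, all of $\as,\asd$ act without domain obstructions on $\HIB$.)

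The second step is to transport the scalar factors $\opu$ and $\pu$ from the right back to the left, so as to match $-A=\opu\cdot\as^2-\pu\cdot(\asd)^2$. This is exactly where Proposition~\ref{xAq} is the decisive ingredient: writing $L_\qu$ for the left-multiplication operator $\phi\mapsto\qu\cdot\phi$, so that $\qu\cdot B=L_\qu B$ and $B\cdot\qu=B\,L_\qu$ by Eqs.~(\ref{lft_mul-op})--(\ref{rgt_mul-op}), the proposition states that $L_\qu$ commutes with both $\as$ and $\asd$, and therefore with their squares. Consequently $\as^2\cdot\opu=L_{\opu}\as^2=\opu\cdot\as^2$ and $(\asd)^2\cdot\pu=L_{\pu}(\asd)^2=\pu\cdot(\asd)^2$. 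Substituting these into the expression for $A^\dagger$ yields $A^\dagger=\opu\cdot\as^2-\pu\cdot(\asd)^2=-A$, which is precisely the assertion.

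I do not expect a genuine analytic obstacle; the content is essentially bookkeeping, and the only place where care is truly needed is the interplay of three effects at once: the quaternionic conjugation introduced by the adjoint formula (\ref{sc_mul_aj-op}), the order-reversal of the adjoint on the compositions $(\asd)^2$ and $\as^2$, and the fact that left and right scalar multiplications of operators are \emph{a priori} distinct. The commutation statement of Proposition~\ref{xAq} is what collapses the distinction between left and right placement in this situation; without it one could not pass from $\as^2\cdot\opu$ to $\opu\cdot\as^2$, and the two expressions would not visibly coincide. Thus the single substantive step is the invocation of Proposition~\ref{xAq}, and everything else is routine manipulation with the rules collected in Proposition~\ref{lft_mul} and Eqs.~(\ref{lft_mul-op})--(\ref{sc_mul_aj-op}).
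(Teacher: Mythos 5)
Your proposal is correct and follows exactly the paper's own argument: apply the adjoint rule $(\qu\cdot B)^{\dagger}=B^{\dagger}\cdot\overline{\qu}$ of Eq.~(\ref{sc_mul_aj-op}) together with $((\asd)^2)^{\dagger}=\as^2$ and $(\as^2)^{\dagger}=(\asd)^2$, then invoke Proposition~\ref{xAq} to move the quaternionic scalars from the right back to the left, yielding $A^{\dagger}=\opu\cdot\as^2-\pu\cdot(\asd)^2=-A$. Your only addition is the explicit remark that the commutation in Proposition~\ref{xAq} extends from $\as,\asd$ to their squares, a point the paper uses implicitly.
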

\begin{proof}
Consider
\begin{eqnarray*}
A^{\dagger}&=&(\pu\cdot (\asd)^2-\overline{\pu}\cdot\as^2)^{\dagger}\\
&=&((\asd)^2)^{\dagger}\cdot\overline{\pu}-(\as^2)^{\dagger}\cdot\pu\quad {\text{by}}~~\ref{sc_mul_aj-op}\\
&=&\as^2\cdot\overline{\pu}-(\asd)^2\cdot\pu\\
&=&\overline{\pu}\cdot \as^2-\pu\cdot (\asd)^2\quad \text{by Prop.} \ref{xAq}\\
&=&-A.
\end{eqnarray*}
\end{proof}
Let $A^\dagger=-A=B$, then $A$ and $B$ commute and both commute with the commutator $[A,B]$. Further $e^{-\frac{1}{2}[A, B]}=1$, therefore by the Baker-Campbell-Hausdorff formula,
$$e^{A}e^{B}e^{-\frac{1}{2}[A, B]}=e^{A+B}$$
we have, for the operator
$$S(\pu)=e^{\frac{1}{2}(\pu\cdot (\asd)^2-\overline{\pu}\cdot\as^2)},$$
$$S(\pu)S(\pu)^\dagger=e^{\frac{1}{2}A}e^{\frac{1}{2}A^\dagger}=e^{\frac{1}{2}(A-A)}=I_{\HI^B_r}.$$
That is the operator $S(\pu)$ is unitary and we call this operator the {\em quaternionic squeeze operator}. Further
$$S(\pu)^\dagger=e^{-\frac{1}{2}A}=S(-\pu).$$
If we take
$$K_+=\frac{1}{2}(\asd)^2,\quad K_{-}=\frac{1}{2}\as^2,\quad\text{and}\quad K_0=\frac{1}{2}(\asd\as+\frac{1}{2}I_{\HI^B_r}),$$
Then they satisfy the commutation rules
$$[K_0, K_+]=K_+,\quad [K_0, K_{-}]=-K_{-},\quad\text{and}\quad [K_+,K_{-}]=-2K_0.$$
That is, $K_+, K_{-}$ and $K_0$ are the generators of the $su(1,1)$ algebra and they satisfy the $su(1,1)$ commutation rules. In terms of these operators the squeeze operator $S(\pu)$ can be written as
\begin{equation}\label{sq1}
S(\pu)=e^{\pu\cdot K_+-\opu\cdot K_{-}}.
\end{equation}
%%%%%%%%%%%%%%%%%%%%%%%%%%%%%%%%%%%%%%%%%%%%%%%%%%%%%%%%%%%%%%%%%%%%%%%%%%%%%%%%%%%%%%%%%%%%%%%%
\subsection{Some Quaternionic Lie Algebras}
In the complex quantum mechanics the generators $\{N, \as^2, (\asd)^2\}$ spans the Lie algebra $su(1,1)$ and this algebra is involved in the construction of pure squeezed states.  The generators $\{\as, \asd, I, N, \as^2, (\asd)^2\}$ spans a six dimensional algebra $\mathfrak{h}_6$, which is involved in the construction of the generalized squeezed states \cite{Gaz}. In the following we generalize it to quaternions.\\

Let $\tau\in\{i,j,k\}$ and define
$$\mathfrak{h}_{6}^{(\tau)}=\mbox{linear span over }\mathbb{C}_\tau\,\{\mathbb{I}_{\HIB}, \mathsf{a}, \mathsf{a}^\dagger,N=\mathsf{a}^\dagger\mathsf{a}, \mathsf{a}^2,(\mathsf{a}^\dagger)^2\};$$
where $\mathbb{C}_\tau=\{x=x_1+\tau x_2~|~x_1, x_2\in\mathbb{R}\}$. Then the Proposition \ref{lft_mul} guarantees, together with the Remark 2.5, that $\mathfrak{h}_{6}^{(\tau)}$ is a vector space over $\mathbb{C}_\tau$ under the left multiplication \textquoteleft$\cdot$\textquoteright \,which is defined in (\ref{lft_mul-op}). Define \begin{equation}\label{lb1}
[\cdot,\cdot]_\tau:\mathfrak{h}_{6}^{(\tau)}\times\mathfrak{h}_{6}^{(\tau)}
\longrightarrow\mathfrak{h}_{6}^{(\tau)} \quad{\text{by}}\quad
[\mathcal{A}, \mathcal{B}]_\tau=\mathcal{A} \mathcal{B}-\mathcal{B} \mathcal{A},\mbox{~~for all~~}\mathcal{A}, \mathcal{B}\in\mathfrak{h}_{6}^{(\tau)}.
\end{equation}
One can easily see that the bracket $[\cdot, \cdot]$ satisfies the following axioms:
\begin{itemize}
	\item[(a)] \textit{Bilinearity}: for all $x,y\in\mathbb{C}_\tau$ and $\mathcal{A},\mathcal{B},\mathcal{C}\in\mathfrak{h}_{6}^{(\tau)}$, $$[x\mathcal{A}+y\mathcal{B},\mathcal{C}]_\tau=x[\mathcal{A},\mathcal{C}]_\tau+y[\mathcal{B},\mathcal{C}]_\tau
	\quad{\text{and}}\quad [\mathcal{A},x\mathcal{B}+y\mathcal{C}]_\tau=x[\mathcal{A},\mathcal{B}]_\tau+y[\mathcal{A},\mathcal{C}]_\tau.$$
	\item [(b)]\textit{Alternativity}: $[\mathcal{A},\mathcal{A}]_\tau=0$, for all  $\mathcal{A}\in\mathfrak{h}_{6}^{(\tau)}$.
	\item [(c)] \textit{The Jacobi identity}: for all  $\mathcal{A},\mathcal{B},\mathcal{C}\in\mathfrak{h}_{6}^{(\tau)}$. $$[\mathcal{A},[\mathcal{B},\mathcal{C}]_\tau]_\tau+[\mathcal{C},
	[\mathcal{A},\mathcal{B}]_\tau]_\tau+[\mathcal{B},[\mathcal{C},\mathcal{A}]_\tau]_\tau=0.$$
	\item [(d)] \textit{Anti-commutativity}: $[\mathcal{A},\mathcal{B}]_\tau=-[\mathcal{B},\mathcal{A}]_\tau$, for all  $\mathcal{A},\mathcal{B}\in\mathfrak{h}_{6}^{(\tau)}$.
\end{itemize}
Let $\mathcal{A}, \mathcal{B}\in\mathfrak{h}_{6}^{(\tau)}$, then there exists $a,b,c,d,e,f,u,v,w,x,y,z\in\mathbb{C}_\tau$ such that
$$\mathcal{A}=a\cdot\mathbb{I}_{\HIB}+ b\cdot\mathsf{a}+  c\cdot \mathsf{a}^\dagger+ d\cdot N+ e\cdot\mathsf{a}^2+f\cdot(\mathsf{a}^\dagger)^2$$
and $$\mathcal{B}=u\cdot\mathbb{I}_{\HIB}+ v\cdot\mathsf{a}+  w\cdot \mathsf{a}^\dagger+ x\cdot N+ y\cdot\mathsf{a}^2+z\cdot(\mathsf{a}^\dagger)^2.$$
Using the facts that
\begin{center}
	\begin{tabular}{c c c c}
		$[\mathsf{a},\mathsf{a}^\dagger]_\tau=\mathbb{I}_{\HIB}$, &	$[\mathsf{a},N]_\tau=\mathsf{a}$, & $[\mathsf{a}^\dagger,N]_\tau=-\mathsf{a}^\dagger$, & $[\mathsf{a}^2,(\mathsf{a}^\dagger)^2]_\tau=-2(2N+\mathbb{I}_{\HIB}),$\\\\
		$[\mathsf{a}^2,\mathsf{a}^\dagger]_\tau=2\mathsf{a}$, & $[(\mathsf{a}^\dagger)^2,\mathsf{a}]_\tau=-2\mathsf{a}^\dagger$, & $[\mathsf{a}^2,N]_\tau=2\mathsf{a}^2$,& $[(\mathsf{a}^\dagger)^2,N]_\tau=-2(\mathsf{a}^\dagger)^2$.
	\end{tabular}
\end{center}
with the aid of Proposition \ref{xAq}, we can obtain that
$[\mathcal{A},\mathcal{B}]_\tau\in\mathfrak{h}_{6}^{(\tau)}$. Hence $\mathfrak{h}_{6}^{(\tau)}$ is a Lie algebra with the Lie bracket $[\cdot,\cdot]_\tau$. It is a sub case of the following Lie algebra, since $\mathfrak{h}_{6}^{(\tau)}$ involves a single $\tau\in\{i,j,k\}$ at a time.

One can easily check that the subset (but it is a linear space itself over $\mathbb{R}$) of $\mathfrak{h}_{6}^{(\tau)}$, $$\mbox{linear span over }\mathbb{R}\,\{\mathbb{I}_{\HIB}, \mathsf{a}, \mathsf{a}^\dagger,N=\mathsf{a}^\dagger\mathsf{a}, \mathsf{a}^2,(\mathsf{a}^\dagger)^2\}$$ forms a Lie algebra with the Lie bracket $[\mathcal{A},\mathcal{B}]_1=\mathcal{A} \mathcal{B}-\mathcal{B} \mathcal{A}$, for all elements $\mathcal{A},\mathcal{B}$ in this linear space (it is a restriction map of $[\cdot,\cdot]_\tau$). Furthermore, we have another subset  $$\mathfrak{h}_{12}^{(\tau)}=\mbox{linear span over }\mathbb{R}\,\{\mathbb{I}_{\HIB},  N,\mathsf{a}, \mathsf{a}^\dagger, \mathsf{a}^2,(\mathsf{a}^\dagger)^2,\tau\cdot\mathbb{I}_{\HIB},  \tau\cdot N, \tau\cdot\mathsf{a},\tau\cdot\mathsf{a}^\dagger,\tau\cdot\mathsf{a}^2,\tau\cdot(\mathsf{a}^\dagger)^2 \}$$ is a linear space over $\mathbb{R}$, and forms Lie algebra with the Lie bracket $[\cdot,\cdot]_\tau$.
Moreover, An arbitrary element $\mathcal{A}\in\mathfrak{h}_{12}^{(\tau)}$ takes the form of
\begin{equation}\label{eqA}
\mathcal{A}=a_1\cdot\mathbb{I}_{\HIB}+  a_2\cdot N+a_3\cdot\mathsf{a}+a_4\cdot\mathsf{a}^\dagger+a_5\cdot\mathsf{a}^2+a_6\cdot(\mathsf{a}^\dagger)^2+a_\tau^{(1)} \tau\cdot\mathsf{a}+a_\tau^{(2)}\tau\cdot\mathsf{a}^\dagger +a_\tau^{(3)}\tau\cdot\mathsf{a}^2+a_\tau^{(4)}\tau\cdot(\mathsf{a}^\dagger)^2;
\end{equation}
where $a_l,a_\tau^{(m)}\in\mathbb{R},$ for all $l=1,2,\cdots,6$, $m=1,2,\cdots,4$.
It can be simply expressed as
\begin{equation}\label{Expr}
\mathcal{A}=\mathcal{A}_1+\mathcal{A}_\tau;
\end{equation}
where
\begin{equation}\label{eqA1}
\mathcal{A}_1=a_1\cdot\mathbb{I}_{\HIB}+  a_2\cdot N+a_3\cdot\mathsf{a}+a_4\cdot\mathsf{a}^\dagger+a_5\cdot\mathsf{a}^2+a_6\cdot(\mathsf{a}^\dagger)^2,
\end{equation}
\begin{equation}\label{eqAtau}
\mathcal{A}_\tau=a_\tau^{(1)} \tau\cdot\mathbb{I}_{\HIB}+  a_\tau^{(2)} \tau\cdot N+a_\tau^{(3)} \tau\cdot\mathsf{a}+a_\tau^{(4)}\tau\cdot\mathsf{a}^\dagger +a_\tau^{(5)}\tau\cdot\mathsf{a}^2+a_\tau^{(6)}\tau\cdot(\mathsf{a}^\dagger)^2.
\end{equation}

Only for notational convenience, in order to write a quaternion as $\qu=q_0+\sum_{\tau=i,j,k}q_{\tau}\tau$, we shall write $\qu=q_0+q_ii+q_jj+q_kk$ with $q_0,q_i,q_j,q_k\in\mathbb{R}$. Let $$\mathfrak{h}_{24}=\mbox{linear span over }\,\mathbb{R}\,\{\tau\cdot\mathbb{I}_{\HIB},\tau\cdot N, \tau\cdot\mathsf{a},\tau\cdot\mathsf{a}^\dagger,\tau\cdot\mathsf{a}^2,\tau\cdot(\mathsf{a}^\dagger)^2~|~\tau=1,i,j,k\}.$$
Then $\mathfrak{h}_{24}$ is a vector space over $\mathbb{R}$, and it contains  $\mathfrak{h}_{12}^{(\tau)}$. Define the map $[\cdot,\cdot]:\mathfrak{h}_{24}\times\mathfrak{h}_{24}\longrightarrow\mathfrak{h}_{24},$ by the setting (using the expression (\ref{Expr}))
\begin{equation}\label{liebraH}
[\mathcal{A},\mathcal{B}]:=[\mathcal{A}_1,\mathcal{B}_1]_1+\sum_{\tau=i,j,k}[\mathcal{A}_1,\mathcal{B}_\tau]_\tau+\sum_{\tau=i,j,k}[\mathcal{A}_\tau,\mathcal{B}_1+\mathcal{B}_\tau]_\tau,\mbox{~~for all ~~}\mathcal{A},\mathcal{B}\in\mathfrak{h}_{24}.
\end{equation}
Alternatively it can be written as
\begin{equation}\label{liebraHAl}
[\mathcal{A},\mathcal{B}]=\sum_{\tau=i,j,k}\left(\dfrac{1}{3} [\mathcal{A}_1,\mathcal{B}_1]_\tau+[\mathcal{A}_1,\mathcal{B}_\tau]_\tau+[\mathcal{A}_\tau,\mathcal{B}_1+\mathcal{B}_\tau]_\tau\right),\mbox{~~for all ~~}\mathcal{A},\mathcal{B}\in\mathfrak{h}_{24}.
\end{equation}
Since $[\cdot,\cdot]_\tau$ is a Lie bracket, it follows that $\mathfrak{h}_{24}$ is a Lie algebra with the Lie bracket $[\cdot,\cdot]$. The proof of following Proposition follows using this Lie bracket $[\cdot,\cdot]$.
\begin{proposition}\label{Psqop}
Let $\displaystyle\pu=|\pu|e^{i\theta\sigma(\hat{n})}$ and $N=\asd\as$, the number operator, then the squeeze operator $S(\pu)$ satisfies the following relations
\begin{eqnarray*}
(i)~S(\pu)^{\dagger}\as S(\pu)&=&(\cosh{|\pu|})\as+\left(e^{i\theta\sigma(\hat{n})}\sinh{|\pu|}\right)\cdot\asd.\\
(ii)~S(\pu)^{\dagger}\asd S(\pu)&=&(\cosh{|\pu|})\asd+\left(e^{-i\theta\sigma(\hat{n})}\sinh{|\pu|}\right)\cdot\as.\\
(iii)~S(\pu)^{\dagger}N S(\pu)&=&(\cosh^2{|\pu|})\asd\as+\left(e^{-i\theta\sigma(\hat{n})}\sinh{|\pu|}\cosh{|\pu|}\right)\cdot\as^2\\
&+&\left(e^{i\theta\sigma(\hat{n})}\sinh{|\pu|}\cosh{|\pu|}\right)\cdot(\asd)^2+(\sinh^2{|\pu|})\as\asd.
\end{eqnarray*}
\end{proposition}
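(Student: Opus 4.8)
The plan is to reduce all three identities to the Hadamard (Baker--Campbell--Hausdorff) expansion of a conjugation, exactly as in the complex case, and to tame the non-commutativity by showing that the quaternionic coefficients collapse into real numbers at the decisive stage. Since $A=\pu\cdot(\asd)^2-\opu\cdot\as^2$ is anti-hermitian (the preceding lemma) we have $S(\pu)^{\dagger}=e^{-\frac12 A}$, so for each $Y\in\{\as,\asd,N\}$ the conjugate is $S(\pu)^{\dagger}\,Y\,S(\pu)=e^{-\frac12 A}\,Y\,e^{\frac12 A}$. Because we work in the Fock space $\HIB$ spanned by the regular monomials, where $\as,\asd$ carry no domain obstruction, the Hadamard lemma applies and
$$S(\pu)^{\dagger}Y S(\pu)=\sum_{n=0}^{\infty}\frac{1}{n!}\,\mathrm{ad}_X^{\,n}(Y),\qquad X:=-\tfrac12 A,\quad \mathrm{ad}_X(Y)=[X,Y].$$

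First I would record the two base commutators. Using $[\as,(\asd)^2]=2\asd$, $[\as^2,\asd]=2\as$ and $[\as^2,\as]=[(\asd)^2,\asd]=0$, together with Proposition \ref{xAq}, which lets every left multiplication $\pu\cdot$ and $\opu\cdot$ slide past $\as$ and $\asd$, a direct computation gives
$$\mathrm{ad}_X(\as)=\pu\cdot\asd,\qquad \mathrm{ad}_X(\asd)=\opu\cdot\as.$$

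The crucial step is the second-order bracket $\mathrm{ad}_X^2(\as)=[X,\pu\cdot\asd]$. Here two left multiplications meet, and I would invoke Proposition \ref{lft_mul}(c), namely $\qu\cdot(\pu\cdot\phi)=(\qu\pu)\cdot\phi$, together with $\opu\,\pu=\pu\,\opu=|\pu|^2\in\mathbb{R}$, to see that the two cancelling $(\asd)^3$-terms carry the identical quaternionic prefactor $\pu^2$, while the surviving term reduces to $\tfrac12|\pu|^2[\as^2,\asd]=|\pu|^2\as$. Thus $\mathrm{ad}_X^2(\as)=|\pu|^2\as$, and since $|\pu|^2$ is a real scalar commuting with everything, an immediate induction yields $\mathrm{ad}_X^{2m}(\as)=|\pu|^{2m}\as$ and $\mathrm{ad}_X^{2m+1}(\as)=|\pu|^{2m}\,\pu\cdot\asd$. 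Re-summing the even and odd parts of the series produces $\cosh|\pu|\,\as+\frac{\sinh|\pu|}{|\pu|}\,\pu\cdot\asd$, and writing $\pu=|\pu|e^{i\theta\sigma(\hat n)}$ gives exactly (i). Statement (ii) follows from the same computation with the roles of $\as,\asd$ and of $\pu,\opu$ interchanged, using $\opu=|\pu|e^{-i\theta\sigma(\hat n)}$.

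For (iii) I would avoid expanding $N$ directly and instead insert $I_{\HIB}=S(\pu)S(\pu)^{\dagger}$ to factor $S(\pu)^{\dagger}N S(\pu)=\bigl(S(\pu)^{\dagger}\asd S(\pu)\bigr)\bigl(S(\pu)^{\dagger}\as S(\pu)\bigr)$, then substitute (i) and (ii) and multiply out the four terms. The cross terms are handled again by Proposition \ref{xAq} (to move $\as,\asd$ past the coefficients) and Proposition \ref{lft_mul}(c) (to merge two adjacent coefficients); the only extra identity needed is $\overline{e^{i\theta\sigma(\hat n)}}\,e^{i\theta\sigma(\hat n)}=1$, which makes the $\as\asd$ coefficient the real number $\sinh^2|\pu|$. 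I expect the main obstacle to be precisely the bookkeeping in this second-order bracket and in these cross terms: one must verify that interleaved left multiplications by $\pu$ and $\opu$ recombine in the correct order. The non-commutativity is ultimately harmless only because every product of a coefficient with its own conjugate that arises equals the real number $|\pu|^2$ (or $1$), which is exactly what allows the hyperbolic-function resummation to go through verbatim from the complex case.
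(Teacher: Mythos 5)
Your proposal is correct and follows essentially the same route as the paper: the Hadamard expansion $e^{X}Ye^{-X}=\sum_{n}\frac{1}{n!}\,\mathrm{ad}_X^{\,n}(Y)$ with the same iterated commutators $\mathrm{ad}_X(\as)=\pu\cdot\asd$, $\mathrm{ad}_X^{2}(\as)=|\pu|^{2}\as$, resummed into $\cosh|\pu|$ and $\sinh|\pu|$, and with (iii) obtained by inserting $S(\pu)S(\pu)^{\dagger}$ and multiplying (i) and (ii). The only cosmetic differences are that the paper gets (ii) as the hermitian conjugate of (i) rather than by rerunning the computation, and that your explicit appeal to Proposition \ref{lft_mul}(c) and Proposition \ref{xAq} to merge the quaternionic prefactors (e.g.\ $\opu\pu=|\pu|^2$ in the second-order bracket and $\overline{e^{i\theta\sigma(\hat n)}}\,e^{i\theta\sigma(\hat n)}=1$ in the cross terms) spells out bookkeeping the paper leaves implicit.
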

\begin{proof}
With $A=\frac{1}{2}(\pu\cdot(\asd)^2-\opu\cdot\as^2)$ and the commutation rule $[\as,\asd]=I_{\HIB}$ we can calculate
\begin{eqnarray*}
& &[-A, \as]=\pu\cdot\asd\\
& &[-A,[-A, \as]]=|\pu|^2\as\\
& & [-A, [-A,[-A, \as]]]=|\pu|^2\pu\cdot\asd\\
& &[-A, [-A, [-A,[-A, \as]]]]=|\pu|^4\as\\
& &[-A, [-A, [-A, [-A,[-A, \as]]]]]=|\pu|^4\pu\cdot\asd\\
& &\cdots\cdots\cdots
\end{eqnarray*}
Therefore, by using the identity $e^CBe^{-C}=B+[C,B]+\frac{1}{2!}[C,[C,B]]+\cdots$ we have
\begin{eqnarray*}
S(\pu)^{\dagger}\as S(\pu)&=&\as+\pu\cdot\asd+\frac{1}{2!}|\pu|^2\as+\frac{1}{3!}|\pu|^2\pu\cdot\asd+\frac{1}{4!}|\pu|^4\as
+\frac{1}{5!}|\pu|^4\pu\cdot\asd +\cdots\\
&=&(\as+\frac{1}{2!}|\pu|^2\as+\frac{1}{4!}|\pu|^4\as+\cdots)+e^{i\theta\sigma(\hat{n})}\cdot
(|\pu|\asd+\frac{1}{3!}|\pu|^3\asd+\frac{1}{5!}|\pu|^5\asd+\cdots)\\
&=&\left(\sum_{n=0}^\infty\frac{|\pu|^{2n}}{(2n)!}\right)\as+e^{i\theta\sigma(\hat{n})}
\left(\sum_{n=0}^\infty\frac{|\pu|^{2n+1}}{(2n+1)!}\right)\cdot\asd\\
&=&(\cosh|\pu|)\as+\left(e^{i\theta\sigma(\hat{n})}\sinh|\pu|\right)\cdot\asd.
\end{eqnarray*}
The second relation is the hermitian conjugate of the first one. The third relation can be obtained by writing
$$S(\pu)^{\dagger}N S(\pu)=S(\pu)\asd\as S(\pu)=S(\pu)\as S(\pu)S(\pu)^{\dagger}\as S(\pu)$$
and then multiplying the first and the second relations.
\end{proof}
%%%%%%%%%%%%%%%%%%%%%%%%%%%%%%%%%%%%%%%%%%%%%%%%%%
\subsection{Right quaternionic quadrature operators}
We introduce the quadrature operators analogous to the complex quadrature operators with a left multiplication on a right quaternionic Hilbert space.
\begin{equation}\label{Q-op}
X=\frac{1}{2}(\as+\asd)\quad\text{and}\quad Y=-\frac{i}{2}\cdot (\as-\asd),
\end{equation}
where the quaternion unit $i$ in $Y$ can be replaced by $j, k$ or any $I\in\mathbb{S}$ (see \cite{MTI}).
\begin{proposition}
The operators $X$ and $Y$ are self-adjoint and $[X,Y]=\frac{i}{2}\cdot I_{\HI^B_r}.$
\end{proposition}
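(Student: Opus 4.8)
The plan is to verify the three assertions in turn, the only real care being needed in handling the left multiplication by $-\frac{i}{2}$ in the presence of noncommutativity.

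\emph{Self-adjointness of $X$.} Since the factor $\tfrac12$ is real and $(\asd)^{\dagger}=\as$ (equivalently $\as^{\dagger}=\asd$), I would simply compute $X^{\dagger}=\tfrac12(\as^{\dagger}+(\asd)^{\dagger})=\tfrac12(\asd+\as)=X$. This step is purely formal.

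\emph{Self-adjointness of $Y$.} Writing $Y=\qu\cdot(\as-\asd)$ with $\qu=-\tfrac{i}{2}$, I would apply the adjoint rule $(\qu\cdot A)^{\dagger}=A^{\dagger}\cdot\overline{\qu}$ from \eqref{sc_mul_aj-op}, together with $(\as-\asd)^{\dagger}=\asd-\as$ and $\overline{\qu}=\tfrac{i}{2}$, to obtain $Y^{\dagger}=(\asd-\as)\cdot\tfrac{i}{2}$. The decisive move is then to convert this \emph{right} multiplication into a \emph{left} one: by Proposition \ref{xAq} the operators $\as,\asd$ commute with left multiplication by any quaternion, so that $(\asd-\as)\cdot\tfrac{i}{2}=\tfrac{i}{2}\cdot(\asd-\as)=-\tfrac{i}{2}\cdot(\as-\asd)=Y$. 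This trading of a right factor for a left factor, justified solely by Proposition \ref{xAq}, is exactly what forces $Y$ to be hermitian and is the one nontrivial point of the self-adjointness claim.

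\emph{The commutator.} Because $-\tfrac{i}{2}\cdot$ commutes with both $\as$ and $\asd$ (again Proposition \ref{xAq}, in the form $\qu\cdot(\as\phi)=\as(\qu\cdot\phi)$ and likewise for $\asd$), I would factor the scalar out of both $XY$ and $YX$ to reach
$$[X,Y]=-\frac{i}{4}\cdot\big[(\as+\asd)(\as-\asd)-(\as-\asd)(\as+\asd)\big].$$
Expanding the bracket, the $\as^{2}$ and $(\asd)^{2}$ terms cancel, leaving $-2(\as\asd-\asd\as)=-2[\as,\asd]$. Invoking the canonical commutation relation $[\as,\asd]=I_{\HI^B_r}$ then yields $-\tfrac{i}{4}\cdot(-2I_{\HI^B_r})=\tfrac{i}{2}\cdot I_{\HI^B_r}$, as asserted.

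The only genuine obstacle throughout is the bookkeeping of the noncommutative scalar: one must consistently keep the imaginary unit $i$ on the left through the left multiplication `$\cdot$' and repeatedly use Proposition \ref{xAq} to move it past $\as$ and $\asd$. Once this is done the remaining operator-theoretic content collapses to the standard relation $[\as,\asd]=I_{\HI^B_r}$, so no further analytic input is required.
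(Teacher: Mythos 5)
Your proof is correct and follows essentially the same route as the paper: self-adjointness via the adjoint rule \eqref{sc_mul_aj-op} together with Proposition \ref{xAq} to trade the right factor $\overline{\qu}$ for a left one (a step the paper defers to \cite{MTI}), and the commutator by moving the left scalar $-\frac{i}{2}$ past $\as,\asd$ using Proposition \ref{xAq} and reducing to $[\as,\asd]=I_{\HI^B_r}$. The only cosmetic difference is that you factor the scalar out of both products before expanding, whereas the paper expands $XY$ and $YX$ separately; the content is identical.
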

\begin{proof}
The self-adjointness of the operators  follows prom the Prop. \ref{xAq} (see \cite{MTI}). Now it is straight forward to compute, with the aid of Prop. \ref{xAq},
$$XY=-\frac{1}{4}(\as+\asd)(i\cdot (\as-\asd)=-\frac{1}{4}i\cdot (\as^2-\as\asd+\asd\as-(\asd)^2)$$
and
$$YX=-\frac{1}{4}i\cdot(\as-\asd)(\as+\asd)=-\frac{1}{4}i\cdot(\as^2+\as\asd-\asd\as-(\asd)^2).$$
Thus $$[X,Y]=XY-YX=-\frac{1}{2}i\cdot(-\as\asd+\asd\as)=\frac{1}{2}i\cdot[\as,\asd]=\frac{1}{2}i\cdot I_{\HI^B_r}.$$
\end{proof}
%%%%%%%%%%%%%%%%%%%%%%%%%%%%%%%%%%%%%%%%%%%%%%%%%%%%%%
\begin{definition}\cite{Gaz} Let $A$ and $B$ be quantum observables with commutator $[A, B]=i\cdot C$.Then from Cauchy-Schwarz inequality $(\Delta A)(\Delta B)\geq\frac{1}{2}|\langle C\rangle|$. A state will be called squeezed with respect to the pair $(A, B)$ if $(\Delta A)^2$ (or $(\Delta B)^2)<\frac{1}{2}|\langle C\rangle|$.
A state is called ideally squeezed if the equality $(\Delta A)(\Delta B)=\frac{1}{2}|\langle C\rangle|$ is reached together with $(\Delta A)^2$ (or $(\Delta B)^2)<\frac{1}{2}|\langle C\rangle|$.
\end{definition}
We adapt the same definition for quaternionic squeezed states.
\subsection{Right quaternionic pure squeezed states}
A pure squeezed state is produced by the sole action of the unitary operator $S(\pu)$ on the vacuum state.
Let $C=\frac{1}{2}\pu\cdot(\asd)^2$ and $D=\frac{1}{2}\overline{\pu}\cdot\as^2$. Then it can be computed that
\begin{eqnarray*}
[C,D]\Phi_n&=&-\frac{1}{4}\left(\pu\cdot(\asd)^2~\overline{\pu}\cdot\as^2
-\overline{\pu}\cdot\as^2~\pu\cdot(\asd)^2\right)\Phi_n\\
&=&\frac{1}{4}|\pu|^2\left((\asd)^2\as^2-\as^2(\asd)^2\right)\Phi_n\\
&=&-\frac{1}{2}|\pu|^2(n+1)\Phi_n.
\end{eqnarray*}
That is $[C,D]=-\frac{1}{2}|\pu|^2(n+1)I_{\HI^B_r}$. Further, similarly, we can obtain
$$[C,[C,D]]\Phi_n=0\quad\text{and}\quad [D,[C,D]]\Phi_n=0.$$
That is $[C,[C,D]]=0$ and $[D,[C,D]]=0$. Therefore from the BCH formula we have
$$S(\pu)=e^{C-D}=e^{-\frac{1}{2}[C,D]}e^Ce^{-D}.$$
Now
\begin{eqnarray*}
S(\pu)\Phi_0&=&e^{-\frac{1}{2}[C,D]}e^Ce^{-D}\Phi_0\\
&=&e^{-\frac{1}{2}[C,D]}e^C\Phi_0\\
&=&e^{-\frac{1}{2}[C,D]}\sum_{n=0}^{\infty}\frac{(\pu\cdot(\asd)^2)^n}{2^n n!}\Phi_0\\
&=&e^{-\frac{1}{2}[C,D]}\sum_{n=0}^{\infty}\frac{\pu^n\sqrt{(2n)!}}{2^n n!}\cdot\Phi_{2n}.
\end{eqnarray*}
Further,
\begin{eqnarray*}
[C,D]\Phi_{2n}&=&(\frac{1}{2}\pu\cdot(\asd)^2 \frac{1}{2}\opu\cdot \as^2-\frac{1}{2}\opu\cdot\as^2 \frac{1}{2}\pu(\asd)^2)\Phi_{2n}\\
&=&\frac{1}{4}|\pu|^2((\asd)^2\as^2-\as^2(\asd)^2)\Phi_{2n}\\
&=&\frac{1}{4}|\pu|^2(-8n-2)\Phi_{2n}
\end{eqnarray*}
Therefore
$$e^{-\frac{1}{2}[C,D]}\Phi_{2n}=e^{\frac{1}{8}|\pu|^2(8n+2)}\Phi_{2n}=e^{n|\pu|^2}e^{\frac{1}{4}|\pu|^2}\Phi_{2n}.$$
Thus
\begin{equation}\label{SCS}
S(\pu)\Phi_0=\eta_\pu=e^{\frac{1}{4}|\pu|^2}\sum_{n=0}^{\infty}e^{n|\pu|^2}\frac{\pu^n\sqrt{(2n)!}}{2^n n!}\cdot\Phi_{2n}.
\end{equation}
Since $S(\pu)$ is a unitary operator, by construction we have
$$\langle\eta_\pu|\eta_\pu\rangle=\langle S(\pu)\Phi_0| S(\pu)\Phi_0\rangle=\langle\Phi_0|\Phi_0\rangle=1.$$
The states $\eta_\pu$ are normalized. Since the pure squeezed state $\eta_\pu$ only possess the even numbered basis vector, $\{\Phi_{2n}~~|~~n=0,1,2,\cdots\}$ a resolution of the identity cannot hold on $\HIB$. However, if we form a space right spanned by $\{\Phi_{2n}~~|~~n=0,1,2,\cdots\}$ over the quaternions it may be possible to find a resolution of the identity for that space. However, such an attempt is not necessary and even in the complex case, as far as we know, it does not exist in the literature.
%%%%%%%%%%%%%%%%%%%%%%
\subsubsection{Expectation values and the variances}
For a normalized state $\eta$ the expectation value of an operator $F$ is $\langle F\rangle=\langle\eta|F|\eta\rangle$. First let us see the expectation values of $\as$ and $\asd$ using Proposition \ref{Psqop}.
\begin{eqnarray*}
\langle\as\rangle&=&\langle\eta_\pu|\as|\eta_\pu\rangle=\langle S(\pu)\Phi_0|\as|S(\pu)\Phi_0\rangle\\
&=&\langle\Phi_0|S(\pu)^\dagger\as S(\pu)\Phi_0\rangle\\
&=&\langle\Phi_0|(\cosh{|\pu|})\as+\left(e^{i\theta\sigma(\hat{n})}\sinh{|\pu|}\right)\cdot\asd\Phi_0\rangle\\
&=&(\cosh{|\pu|})\langle\Phi_0|\as\Phi_0\rangle+\sinh{|\pu|}\langle\Phi_0|
\left(e^{i\theta\sigma(\hat{n})}\right)\cdot\asd\Phi_0\rangle\\
&=&0+ \sinh{|\pu|}\langle\Phi_0|
\left(e^{i\theta\sigma(\hat{n})}\right)\cdot\Phi_1\rangle\\
&=&\sinh{|\pu|}\langle\Phi_0|
\Phi_1e^{i\theta\sigma(\hat{n})}\rangle\quad\text{as}~~\Phi_1~~\text{is a basis vector, see Prop.\ref{lft_mul} (f)}\\
&=&\sinh{|\pu|}\langle\Phi_0|
\Phi_1\rangle e^{i\theta\sigma(\hat{n})}=0\\
\end{eqnarray*}
Similarly we get
$$\langle\asd\rangle=\langle\eta_\pu|\asd|\eta_\pu\rangle=0.$$
Hence we get
$$\langle X\rangle=\langle\eta_\pu|X|\eta_\pu\rangle=0\quad\text{and}\quad \langle Y\rangle=\langle\eta_\pu|Y|\eta_\pu\rangle=0.$$
Since
\begin{eqnarray*}
& &S(\pu)^\dagger\as S(\pu)S(\pu)^{\dagger}\asd S(\pu)=S(\pu)^\dagger\as\asd S(\pu)\\
&=&\cosh^2{|\pu|}\as\asd+e^{-i\theta\sigma(\hat{n})}\cosh{|\pu|}\sinh{|\pu|}\cdot\as^2
+e^{i\theta\sigma(\hat{n})}\cosh{|\pu|}\sinh{|\pu|}\cdot(\asd)^2+\sinh^2{|\pu|}\asd\as
\end{eqnarray*}
and similarly
\begin{eqnarray*}
& &S(\pu)^\dagger\asd\as S(\pu)\\
&=&\cosh^2{|\pu|}\asd\as+e^{-i\theta\sigma(\hat{n})}\cosh{|\pu|}\sinh{|\pu|}\cdot\as^2
+e^{i\theta\sigma(\hat{n})}\cosh{|\pu|}\sinh{|\pu|}\cdot(\asd)^2+\sinh^2{|\pu|}\as\asd
\end{eqnarray*}
\begin{eqnarray*}
S(\pu)^\dagger\as^2 S(\pu)
&=&\cosh^2{|\pu|}\as^2+e^{i\theta\sigma(\hat{n})}\cosh{|\pu|}\sinh{|\pu|}\cdot(\as\asd+\asd\as)
+e^{2i\theta\sigma(\hat{n})}\sinh^2{|\pu|}\cdot(\asd)^2\\
S(\pu)^\dagger\as^2 S(\pu)
&=&\cosh^2{|\pu|}(\asd)^2+e^{-i\theta\sigma(\hat{n})}\cosh{|\pu|}\sinh{|\pu|}\cdot(\as\asd+\asd\as)
+e^{-2i\theta\sigma(\hat{n})}\sinh^2{|\pu|}\cdot\as^2.
\end{eqnarray*}
Using the above relations we readily obtain
\begin{eqnarray*}
\langle\as\asd\rangle&=&\langle\eta_\pu|\as\asd|\eta_\pu\rangle=\cosh^2{|\pu|}\\
\langle\asd\as\rangle&=&\langle\eta_\pu|\asd\as|\eta_\pu\rangle=\sinh^2{|\pu|}\\
\langle\as^2\rangle&=&\langle\eta_\pu|\as^2|\eta_\pu\rangle=\cosh{|\pu|}\sinh{|\pu|}e^{i\theta\sigma(\hat{n})}\\
\langle(\asd)^2\rangle&=&\langle\eta_\pu|(\asd)^2|\eta_\pu\rangle=\cosh{|\pu|}\sinh{|\pu|}e^{-i\theta\sigma(\hat{n})}.
\end{eqnarray*}
Since $X^2=\frac{1}{4}((\asd)^2+\as\asd+\asd\as+\as^2)$ and $Y^2=\frac{1}{4}(\as\asd+\asd\as-\as^2-(\asd)^2)$ we have
\begin{eqnarray*}
\langle\eta_\pu|X^2|\eta_\pu\rangle&=&\frac{1}{4}\left\{(\cosh^2{|\pu|}\sinh^2{|\pu|})\mathbb{I}_2
+\cosh{|\pu|}\sinh{|\pu|}\left(e^{i\theta\sigma(\hat{n})}+e^{-i\theta\sigma(\hat{n})}\right)\right\}\\
&=&\frac{1}{4}\left\{\cosh(2|\pu|)\mathbb{I}_2+\sinh(2|\pu|)\cos{(\theta\sigma(\hat{n}))}\right\}\\
\langle\eta_\pu|Y^2|\eta_\pu\rangle
&=&\frac{1}{4}\left\{\cosh(2|\pu|)\mathbb{I}_2-\sinh(2|\pu|)\cos{(\theta\sigma(\hat{n}))}\right\}
\end{eqnarray*}
Since $\langle\Delta X\rangle^2=\langle\eta_\pu|X^2|\eta_\pu\rangle-\langle\eta_\pu|X|\eta_\pu\rangle^2$ and $\langle\eta_\pu|X|\eta_\pu\rangle=0$ we have
\begin{eqnarray*}
\langle\Delta X\rangle^2&=&\frac{1}{4}\left\{\cosh(2|\pu|)\mathbb{I}_2+\sinh(2|\pu|)\cos{(\theta\sigma(\hat{n}))}\right\}\\
\langle\Delta Y\rangle^2&=&\frac{1}{4}\left\{\cosh(2|\pu|)\mathbb{I}_2-\sinh(2|\pu|)\cos{(\theta\sigma(\hat{n}))}\right\}
\end{eqnarray*}
Hence
\begin{eqnarray*}
\langle\Delta X\rangle^2\langle\Delta Y\rangle^2&=&
\frac{1}{16}\left\{\cosh^2(2|\pu|)\mathbb{I}_2-\sinh^2(2|\pu|)\cos^2{(\theta\sigma(\hat{n}))}\right\}\\
&=&\frac{1}{16}\left\{\cosh^2(2|\pu|)\mathbb{I}_2-\sinh^2(2|\pu|)(1-\sin^2{(\theta\sigma(\hat{n}))})\right\}\\
&=&\frac{1}{16}\left\{\mathbb{I}_2+\sinh^2(2|\pu|)\sin^2{(\theta\sigma(\hat{n}))}\right\}
\end{eqnarray*}
An exact analogue of the complex case. Since we are in the quaternions, it appears as a $2\times 2$ matrix. Further in the complex case, the product of the variances depends on $r$ and $\theta$ (when $z=re^{i\theta}$). In the quaternion case it depends on all four parameters $r, \theta, \phi$ and $\psi$. Let us write
$$\displaystyle U+iV=e^{-\frac{i}{2}\theta\sigma(\hat{n})}\cdot(X+iY)=e^{-\frac{i}{2}\theta\sigma(\hat{n})}\cdot\as.$$
Then using Proposition \ref{xAq} we can write
\begin{eqnarray*}
S(\pu)^{\dagger}(U+iV)S(\pu)&=&e^{-\frac{i}{2}\theta\sigma(\hat{n})}\cdot S(\pu)^\dagger\as S(\pu)\\
&=&e^{-\frac{i}{2}\theta\sigma(\hat{n})}\cdot(\cosh{|\pu|}\as+e^{i\theta\sigma(\hat{n})}\sinh{|\pu|}\asd)\\
&=&\frac{e^{|\pu|}+e^{-|\pu|}}{2}e^{-\frac{i}{2}\theta\sigma(\hat{n})}\cdot\as
+\frac{e^{|\pu|}-e^{-|\pu|}}{2}e^{\frac{i}{2}\theta\sigma(\hat{n})}\cdot\asd\\
&=&\frac{1}{2}(e^{-\frac{i}{2}\theta\sigma(\hat{n})}\cdot\as+e^{\frac{i}{2}\theta\sigma(\hat{n})}\cdot\asd)e^{|\pu|}
+\frac{1}{2}(e^{-\frac{i}{2}\theta\sigma(\hat{n})}\cdot\as-e^{\frac{i}{2}\theta\sigma(\hat{n})}\cdot\asd)e^{-|\pu|}\\
&=&Ue^{|\pu|}+i\cdot Ve^{-|\pu|},
\end{eqnarray*}
with
\begin{eqnarray*}
U&=&\frac{1}{2}(e^{-\frac{i}{2}\theta\sigma(\hat{n})}\cdot\as+e^{\frac{i}{2}\theta\sigma(\hat{n})}\cdot\asd)e^{|\pu|}
\quad\text{and}\\
V&=&\frac{-i}{2}(e^{-\frac{i}{2}\theta\sigma(\hat{n})}\cdot\as-e^{\frac{i}{2}\theta\sigma(\hat{n})}\cdot\asd)e^{|\pu|}.
\end{eqnarray*}
Since
\begin{eqnarray*}
U^2&=&\frac{1}{4}(e^{-i\theta\sigma(\hat{n})}\cdot\as^2+\as\asd+\asd\as+e^{i\theta\sigma(\hat{n})}\cdot(\asd)^2)
\quad\text{and}\\
V^2&=&-\frac{1}{4}(e^{-i\theta\sigma(\hat{n})}\cdot\as^2-\as\asd-\asd\as+e^{i\theta\sigma(\hat{n})}\cdot(\asd)^2),
\end{eqnarray*}
it is straight forward that
$\langle\eta_\pu|U|\eta_\pu\rangle=0$, $\langle\eta_\pu|V|\eta_\pu\rangle=0$,
\begin{eqnarray*}
\langle\eta_\pu|U^2|\eta_\pu\rangle
&=&\frac{1}{4}(\cosh{|\pu|}\sinh{|\pu|}+\cosh^2{|\pu|}+\sinh^2{|\pu|}+\cosh{|\pu|}\sinh{|\pu|})\mathbb{I}_2\\
&=&\frac{1}{4}(\cosh{|\pu|}+\sinh{|\pu|})^2\mathbb{I}_2\quad \text{and}\\
\langle\eta_\pu|V^2|\eta_\pu\rangle
&=&-\frac{1}{4}(\cosh{|\pu|}\sinh{|\pu|}-\cosh^2{|\pu|}-\sinh^2{|\pu|}+\cosh{|\pu|}\sinh{|\pu|})\mathbb{I}_2\\
&=&\frac{1}{4}(\cosh{|\pu|}-\sinh{|\pu|})^2\mathbb{I}_2.
\end{eqnarray*}
Hence
$$\langle\Delta U\rangle^2 \langle\Delta V\rangle^2=\frac{1}{16}(\cosh^2{|\pu|}-\sinh^2{|\pu|})^2\mathbb{I}_2=\frac{1}{16}\mathbb{I}_2$$
and therefore
\begin{equation}
\langle\Delta U\rangle \langle\Delta V\rangle=\frac{1}{4}\mathbb{I}_2,
\end{equation}
while $\langle\Delta U\rangle \not=\langle\Delta V\rangle$, an exact analogue of the complex case \cite{Gaz}. Hence, the class of ideally squeezed states contains the set of quaternionic pure squeezed states.\\
Using the relation (iii) in Proposition \ref{Psqop} we obtain the mean photon number
$$\langle N\rangle=\langle\eta_\pu| N|\eta_\pu\rangle=\langle\Phi_0|S(\pu)^\dagger NS(\pu)\Phi_0\rangle=\sinh^2{|\pu|}\I_2.$$
Also using
$$\langle\eta_{\pu}| N^2|\eta_{\pu}\rangle=\langle\Phi_0|S(\pu)^\dagger N S(\pu)S(\pu)^\dagger N S(\pu)\Phi_0\rangle$$
we get
\begin{eqnarray*}
\langle N^2\rangle&=&\langle\Phi_0|S(\pu)^\dagger N S(\pu)S(\pu)^\dagger N S(\pu)\Phi_0\rangle\\
&=&(\sinh^4{|\pu|}+2\sinh^2{|\pu|}\cosh^2{|\pu|})\I_2\\
&=&3\sinh^4{|\pu|}+2\sinh^2{|\pu|}\I_2.
\end{eqnarray*}
Hence the variance is
$$\langle \Delta N\rangle^2=\langle N^2\rangle-\langle N\rangle^2=2\sinh^2{|\pu|}(1+\sinh^2{|\pu|})\I_2.$$
The photon number variance is also described by Mandel's Q-parameter. The Mandel parameter is \cite{Gaz, Lo}
\begin{eqnarray*}
Q_M&=&\frac{\langle \Delta N\rangle^2}{\langle N\rangle}-1\\
&=&\frac{2\sinh^2{|\pu|}(1+\sinh^2{|\pu|})}{\sinh^2{|\pu|}}\I_2-\I_2\\
&=&(1+2\sinh^2{|\pu|})\I_2=2\langle N\rangle+\I_2.
\end{eqnarray*}
Since $Q_M>0$ (as a positive definite matrix) the photon number probability distribution is super-Poissonian.
($Q_M=0$ Poissonian and $Q_M<0$ sub-Poissonian).
%%%%%%%%%%%%%%%%%%%%%%%%%%%%%%%%%%%%%%%%%%%%%%%%%%%%%%%%%%%%%%
\subsubsection{The pure squeezed states with anti-normal ordering of $S(\pu)$}
Even though in order to compute the expectation values and variances the relations in Proposition \ref{Psqop} enough, let us give an expression for the pure squeezed states with the anti-normal ordering of the operator $S(\pu)$.
\begin{eqnarray*}
\eta_{\pu}^a&=&S(\pu)\Phi_0\\
&=&e^{-\frac{1}{2}[C, D]}e^{-D}e^C\Phi_0\\
&=&e^{-\frac{1}{2}[C, D]}e^{-D}\sum_{m=0}^{\infty}\frac{\pu\cdot(\asd)^{2m}}{2^m}\Phi_0\\
&=&e^{-\frac{1}{2}[C, D]}e^{-D}\sum_{m=0}^{\infty}\frac{\pu^m\sqrt{(2m)!}}{2^m m!}\cdot\Phi_{2m}\\
&=&e^{-\frac{1}{2}[C, D]}\sum_{m=0}^{\infty}\sum_{n=0}^{\infty}\frac{\pu^m\sqrt{(2m)!}}{2^m m!}\frac{\opu^n\as^{2n}}{2^nn!}\cdot\Phi_{2m}\\
&=&e^{-\frac{1}{2}[C, D]}\sum_{m=0}^{\infty}\sum_{n=0}^{\infty}\frac{\pu^m\sqrt{(2m)!}}{2^m m!}\sqrt{\frac{(2m)!}{(2m-2n)!}}\frac{\opu^n}{2^nn!}\cdot\Phi_{2m}\\
&=&e^{-\frac{1}{2}[C, D]}\sum_{n=0}^{\infty}\sum_{m=n}^{\infty}\frac{\pu^m\opu^n(2m)!}{4^n m!n!\sqrt{(2m-2n)!}}\cdot\Phi_{2m-2n}\quad\text{assuming}~~m>n\\
&=&e^{-\frac{1}{2}[C, D]}\sum_{n=0}^{\infty}\sum_{s=0}^{\infty}\frac{\pu^{n+s}\opu^n(2n+2s)!}{4^n(n+s)!n!\sqrt{(2s)!} }\cdot\Phi_{2s}\quad\text{taking}~~m-n=s\\
&=&e^{\frac{1}{4}|\pu|^2}\sum_{n=0}^{\infty}\sum_{s=0}^{\infty}\frac{\pu^{n+s}\opu^n(2n+2s)!}{4^n(n+s)!n!\sqrt{(2s)!} }e^{s|\pu|^2}\cdot\Phi_{2s}.
\end{eqnarray*}

%%%%%%%%%%%%%%%%%%%%%%%%%%%%%%%%%%%%%
\subsection{Right quaternionic squeezed states}
In view of Prop. \ref{lft_mul}(f), for a basis vector $\qu\cdot\Phi_n=\Phi_n\qu$, therefore we write the canonical CS as
$$\eta_\qu=\D(\qu)\Phi_0=e^{-|\qu|^2/2}\sum_{n=0}^{\infty}\Phi_n\frac{\qu^n}{\sqrt{n!}}.$$
Let $\displaystyle S(\pu)\Phi_n=\Phi_n^{\pu}$, where the set $\{\Phi_n~~|~~n=0,1,2,\cdots\}$ is the basis of the Fock space of regular Bargmann space $\HI^{B}_r$. Since $S(\pu)$ is a unitary operator, the set $\{\Phi_n^{\pu}~~|~~n=0,1,2,\cdots\}$ is also form an orthonormal basis for $\HI^{B}_r$. That is
\begin{equation}\label{sqbasis}
\langle\Phi_m^{\pu}|\Phi_n^{\pu}\rangle=\delta_{mn}.
\end{equation}
Now the squeezed states are
\begin{equation}\label{sqstates}
\eta_{\qu}^{\pu}=S(\pu)\D(\qu)\Phi_0=S(\pu)\eta_{\qu}
=e^{-|\qu|^2/2}\sum_{n=0}^{\infty}\Phi_n^{\pu}\frac{\qu^n}{\sqrt{n!}}.
\end{equation}
Since the canonical CS are normalized, that is $\langle\eta_\qu|\eta_\qu\rangle=1$, and the squeeze operator $S(\pu)$ is unitary, we have
$$\langle\eta_\qu^\pu|\eta_\qu^\pu\rangle=\langle S(\pu)\eta_\qu|S(\pu)\eta_\qu\rangle=\langle\eta_\qu|\eta_\qu\rangle=1.$$
That is, the squeezed states are normalized. The dual vector of $|S(\pu)\eta_\qu\rangle$ is $\langle\eta_\qu S(\pu)^\dagger|$. Therefore, from the resolution of the identity of the canonical CS,
$$\int_\quat |\eta_\qu\rangle\langle\eta_\qu| d\zeta(r,\theta,\phi,\psi)=I_{\HI^B_r}$$
we get
$$\int_\quat |S(\pu)\eta_\qu\rangle\langle\eta_\qu S(\pu)^\dagger| d\zeta(r,\theta,\phi,\psi)=S(\pu)I_{\HI^B_r}S(\pu)^\dagger=I_{\HI^B_r}.$$
That is the squeezed states satisfy the resolution of the identity,
$$\int_\quat |\eta_\qu^\pu\rangle\langle\eta_\qu^\pu| d\zeta(r,\theta,\phi,\psi)=I_{\HI^B_r}.$$

\begin{remark}
Since the operators $\D(\pu)$ and $S(\qu)$ are unitary operators the states $\D(\pu)S(\qu)\Phi_0$ are normalized but, for the same reason as for the pure squeezed states, these states cannot hold a resolution of the identity in the space $\HIB$. Further, technically, a series expansion for the states $S(\pu)\D(\qu)\Phi_0$ and $\D(\pu)S(\qu)\Phi_0$ can be obtained. However, it is rather complicated and not necessary (even in the complex case).
\end{remark}
In the complex case, combining the results in the Propositions \ref{dis} and \ref{Psqop} (corresponding complex case) one can obtain a relation for the operators $$S(\xi)^\dagger\D(z)^\dagger\as \D(\xi)S(z)\quad\text{and}\quad D(\xi)^\dagger S(z)^\dagger\asd S(z)\D(\xi)$$ or for the operators $$\D(\xi)^\dagger S(z)^\dagger\as S(z)\D(\xi)\quad\text{ and}\quad \D(\xi)^\dagger S(z)^\dagger\asd S(z)\D(\xi)$$ and use them to compute the expectation values and variances of all the required operators. Since quaternions do not commute such relations cannot be obtained for quaternions. For example, if we combine the Propositions \ref{dis} and \ref{Psqop}, when $\pu=|\pu|e^{i\theta\sigma(\hat{n})}$ let $I_\pu=e^{i\theta\sigma(\hat{n})}$,
\begin{eqnarray*}
\D(\qu)^\dagger S(\pu)^{\dagger}\as S(\pu) \D(\qu) &=&
\D(\qu)^{\dagger} \left[(\cosh{|\pu|})\as+I_\pu\sinh{|\pu|}\cdot\asd\right]\D(\qu)\\
&=&\cosh{|\pu|}\D(\qu)^\dagger\as\D(\qu)+\sinh{|\pu|}\D(\qu)^\dagger I_\pu\cdot\asd\D(\qu).
\end{eqnarray*}
Since $\D(\qu)^\dagger I_\pu\cdot\asd\D(\qu)\not=I_\pu\cdot\D(\qu)^\dagger\asd\D(\qu)$, the above expression cannot be computed. In fact, there is no know technique in quaternion analysis to get a closed form for the expression $\D(\qu)^\dagger I_\pu\cdot\asd\D(\qu)$. In this regard, even though we have established normalized squeezed states $S(\pu)\D(\qu)\Phi_0$ with a resolution of the identity, the corresponding expectation values and variances cannot be obtained in a usable form. Since elements in a quaternion slice commute, if we consider squeezed states in a quaternion slice then the computations can carry forward. From the slice-wise analysis we can get to the whole set of quaternions $\quat$ through direct integrals. For such an analysis with quaternionic canonical coherent states we refer to \cite{MT1}.
%%%%%%%%%%%%%%%%%%%%%%%%%%%%%%%%%%%%%%%%%%%%%%%%%%%%%%%%%%%%%%%%%%%%%%%%%%%%%%%%%%%%%%%%%%%%%%%%%
\section{Squeezed states on a quaternion slice}
Let $\C_I$ be a quaternion slice. Since elements in $\C_I$ commute we can obtain the following relations for squeezed coherent states and two photon coherent states, and obtain the related expectation values. The states $\D(\qu)S(\pu)\Phi_0$ are called the two photon coherent states \cite{Yuen, Rodney}. On the other hand the states $\D(\qu)S(\pu)\Phi_0$ are called the squeezed coherent states \cite{Rodney} pp. 207. We shall demonstrate it briefly in this section.
%%%%%%%%%%%%%%%%%%%%%%%%%%%%%%%%%%%%
\subsection{Two photon coherent states}
Let $\pu,\qu\in\C_I$, then we can write
\begin{eqnarray*}
\pu&=&|\pu|e^{I\theta_{\pu}}=|\pu|I_{\pu}=|\pu|(\cos\theta_\pu+I\sin\theta_\pu)\quad\text{and}\\
\qu&=&|\qu|e^{I\theta_{\qu}}=|\qu|I_{\qu}=|\qu|(\cos\theta_\qu+I\sin\theta_\qu).
\end{eqnarray*}
With these notations we obtain the following.
\begin{proposition} The operators $S(\pu)$ and $\D(\qu)$ satisfies the following relations.
\begin{eqnarray*}
\D(\qu)^\dagger S(\pu)^\dagger\as S(\pu)\D(\qu)&=&\cosh{|\pu|}\as\mathbb{I}_2+I_\pu\sinh{|\pu|}\cdot\asd
+\cosh{|\pu|}\qu\I_2+I_\pu\sinh{|\pu|}\oqu\\
\D(\qu)^\dagger S(\pu)^\dagger\asd S(\pu)\D(\qu)&=&\cosh{|\pu|}\asd\mathbb{I}_2+\overline{I}_\pu\sinh{|\pu|}\cdot\as
+\cosh{|\pu|}\oqu\I_2+\overline{I}_\pu\sinh{|\pu|}\qu,\\
\D(\qu)^{\dagger}S(\pu)^{\dagger}N S(\pu)\D(\qu)&=&\cosh^2{|\pu|}(N+\qu\cdot\asd+\oqu\cdot\as+|\qu|^2)\\
&+&\frac{1}{2}\overline{I}_\pu\sinh{(2|\pu|)}\cdot(\as^2+2\qu\cdot\as+\qu^2)\\
&+&\frac{1}{2}I_\pu\sinh{(2|\pu|)}\cdot((\asd)^2+2\oqu\cdot\asd+\oqu^2)\\
&+&\sinh^2{|\pu|}(\as\asd+\oqu\cdot\as+\qu\cdot\asd+|\qu|^2).
\end{eqnarray*}
\end{proposition}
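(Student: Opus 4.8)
The plan is to evaluate each relation in two conjugation stages: first strip off the squeeze operator $S(\pu)$ using Proposition \ref{Psqop}, then strip off the displacement operator $\D(\qu)$ using the shift relations of Proposition \ref{dis}. Because $\pu\in\C_I$ we have $\pu=|\pu|e^{I\theta_\pu}=|\pu|I_\pu$ with $I_\pu=e^{I\theta_\pu}\in\C_I$, so the matrix exponential $e^{i\theta\sigma(\hat{n})}$ occurring in Proposition \ref{Psqop} collapses to the single slice unit $I_\pu$ and part (i) becomes $S(\pu)^\dagger\as S(\pu)=\cosh|\pu|\,\as+\sinh|\pu|\,I_\pu\cdot\asd$. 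Substituting this and using right $\quat$-linearity gives
\[
\D(\qu)^\dagger S(\pu)^\dagger\as S(\pu)\D(\qu)=\cosh|\pu|\,\D(\qu)^\dagger\as\D(\qu)+\sinh|\pu|\,\D(\qu)^\dagger\bigl(I_\pu\cdot\asd\bigr)\D(\qu),
\]
and the analogous identity for $\asd$ follows from Proposition \ref{Psqop}(ii), where $e^{-i\theta\sigma(\hat{n})}$ becomes $\overline{I}_\pu$.

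The crux of the argument — and exactly the point that fails for a general quaternion, as noted in the preceding discussion — is to move the left multiplication by $I_\pu$ across $\D(\qu)$, that is, to prove $\D(\qu)^\dagger\bigl(I_\pu\cdot\asd\bigr)\D(\qu)=I_\pu\cdot\D(\qu)^\dagger\asd\D(\qu)$. I would establish this by showing that the operator $I_\pu\cdot$ commutes with the generator $G=\qu\cdot\asd-\oqu\cdot\as$ of $\D(\qu)=e^{G}$. Indeed, $\as$ and $\asd$ commute with every left multiplication by Proposition \ref{xAq}, and $I_\pu\cdot(\qu\cdot X)=(I_\pu\qu)\cdot X$ by Proposition \ref{lft_mul}(c); hence $[\,I_\pu\cdot,\,\qu\cdot\asd\,]$ collapses to $(I_\pu\qu-\qu I_\pu)\cdot\asd$, which vanishes because $I_\pu,\qu\in\C_I$ and $\C_I$ is commutative, and likewise $[\,I_\pu\cdot,\,\oqu\cdot\as\,]=0$. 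Passing $I_\pu\cdot$ through the norm-convergent exponential series term by term (the Remark following Proposition \ref{lft_mul}) then yields $I_\pu\cdot\D(\qu)=\D(\qu)\circ(I_\pu\cdot)$, and the same commutation for the adjoint $\D(\qu)^\dagger$, which is the required identity. This is precisely the slice hypothesis doing the work that is unavailable over all of $\quat$.

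With the commutation secured, relations (i) and (ii) close off by inserting the displacement shifts $\D(\qu)^\dagger\as\D(\qu)=\as+\qu$ and $\D(\qu)^\dagger\asd\D(\qu)=\asd+\oqu$ from Proposition \ref{dis} and expanding; relation (ii) may alternatively be read off as the hermitian conjugate of (i), the reordering of the scalars $I_\pu$, $\qu$, $\oqu$ again being legitimate only because they lie in the commutative slice $\C_I$. The identity-matrix factors $\mathbb{I}_2$ and $\I_2$ in the statement are merely the $2\times2$ complex-matrix representation of the scalar coefficients and require no separate argument.

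For relation (iii) I would set $W=S(\pu)\D(\qu)$, which is unitary, and write $N=\asd\as$; inserting $WW^\dagger=I_{\HIB}$ between the two factors gives $W^\dagger N W=(W^\dagger\asd W)(W^\dagger\as W)$, so the result is the product of the two expressions already found in (ii) and (i). The main obstacle here is purely organizational rather than conceptual: this product expands into sixteen terms, and regrouping them into the stated $\cosh^2|\pu|$, $\tfrac12\sinh(2|\pu|)$, and $\sinh^2|\pu|$ blocks uses $\overline{I}_\pu I_\pu=1$ together with the identity $2\sinh|\pu|\cosh|\pu|=\sinh(2|\pu|)$, while demanding careful bookkeeping of which monomials carry $I_\pu$, $\overline{I}_\pu$, or neither, and of the placements of $\qu$ and $\oqu$ — once more valid because the entire computation takes place inside $\C_I$.
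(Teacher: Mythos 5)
Your proof is correct and takes essentially the same route as the paper, whose proof is the one-line observation that the relations follow straightforwardly from Propositions \ref{Psqop} and \ref{dis}; your sixteen-term expansion for the third relation regroups exactly into the stated $\cosh^2{|\pu|}$, $\frac{1}{2}\sinh(2|\pu|)$, and $\sinh^2{|\pu|}$ blocks. In fact you supply a detail the paper leaves implicit: the explicit verification that $\D(\qu)^\dagger(I_\pu\cdot\asd)\D(\qu)=I_\pu\cdot\D(\qu)^\dagger\asd\D(\qu)$ via commutation of $I_\pu\cdot$ with the generator $\qu\cdot\asd-\oqu\cdot\as$, which is precisely the step the paper identifies as the obstruction for general quaternions and which the slice hypothesis rescues here.
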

\begin{proof} Proof is straight forward from the results of the Propositions \ref{Psqop} and \ref{dis}.
\end{proof}
For a normalized squeezed state and an operator $F$ we denote the expectation value as $\langle F\rangle_{\pu\qu}=\langle\eta_\qu^\pu|F|\eta_\qu^\pu\rangle$. The following expectation values can be calculated.
\begin{eqnarray*}
\langle \as\rangle_{\pu\qu}&=&\cosh{|\pu|}\qu+I_\pu\sinh{|\pu|}\oqu\\
\langle \asd\rangle_{\pu\qu}&=&\cosh{|\pu|}\oqu+\overline{I}_\pu\sinh{|\pu|}\qu\\
\langle X\rangle_{\pu\qu}&=&|\qu|\left[\cosh{|\pu|}\cos{\theta_\qu}+\sinh{|\pu|}\cos(\theta_\pu-\theta_\qu)\right]\\
\langle Y\rangle_{\pu\qu}&=&|\qu|\left[\cosh{|\pu|}\sin{\theta_\qu}+\sinh{|\pu|}\sin(\theta_\pu-\theta_\qu)\right]\\
\langle \as\asd\rangle_{\pu\qu}&=&\cosh^2{|\pu|}+\cosh(2|\pu|)|\qu|^2
+|\qu|^2\sinh(2|\pu|)\cos(2\theta_\qu-\theta_\pu)\\
\langle \asd\as\rangle_{\pu\qu}&=&\sinh^2{|\pu|}+\cosh(2|\pu|)|\qu|^2
+|\qu|^2\sinh(2|\pu|)\cos(2\theta_\qu-\theta_\pu)\\
\langle \as^2\rangle_{\pu\qu}&=&\frac{1}{2}I_\pu\sinh{(2|\pu|)}(1+2|\qu|^2)+\cosh^2|\pu|\qu^2
+I_\pu^2\sinh^2|\pu|\oqu^2\\
\langle (\asd)^2\rangle_{\pu\qu}&=&\frac{1}{2}\overline{I}_\pu\sinh{(2|\pu|)}(1+2|\qu|^2)+\cosh^2|\pu|\oqu^2
+\overline{I}_\pu^2\sinh^2|\pu|\qu^2.\\
\end{eqnarray*}
Using the above expectations we can readily obtain the following.
\begin{eqnarray*}
& &\langle X^2\rangle=\frac{1}{2}\left[\left(\cosh(2|\pu|)+2|\qu|^2\cosh(2|\pu|)
+2|\qu|^2\sinh(2|\pu|)\cos(2\theta_\qu-\theta_\pu)\right)\right]\\
& + &\frac{1}{2}\left[\left(\cos\theta_\pu\sinh(2|\pu|)(1+2|\qu|^2)+2|\qu|^2\cosh^2|\pu|\cos(2\theta_\qu)
+2|\qu|^2\sinh^2{|\pu|}\cos(2\theta_\pu-2\theta_\qu)\right)\right]
\end{eqnarray*}
and
\begin{eqnarray*}
& &\langle Y^2\rangle=\frac{1}{2}\left[\left(\cosh(2|\pu|)+2|\qu|^2\cosh(2|\pu|)
+2|\qu|^2\sinh(2|\pu|)\cos(2\theta_\qu-\theta_\pu)\right)\right]\\
& - &\frac{1}{2}\left[\left(\cos\theta_\pu\sinh(2|\pu|)(1+2|\qu|^2)+2|\qu|^2\cosh^2|\pu|\cos(2\theta_\qu)
+2|\qu|^2\sinh^2{|\pu|}\cos(2\theta_\pu-2\theta_\qu)\right)\right].
\end{eqnarray*}
Using these expectation values the variances of $X$ and $Y$ can be obtained.
%%%%%%%%%%%%%%%%%%%%%%%%%%%%%
\subsection{Squeezed coherent states}
The squeezed coherent states are defined as $\eta_\pu^{\qu}=\D(\qu)S(\pu)\Phi_0$ \cite{Rodney, Gaz}. We briefly provide some formulas for these states. Once again we are in a quaternion slice $\C_I$ and $\pu$ and $\qu$ are as in the previous section.
\begin{proposition}
\begin{eqnarray*}
S^{\dagger}(\pu)\D(\qu)^{\dagger}\as\D(\qu)S(\pu)&=&\cosh{|\pu|}~\as+I_\pu\sinh{\pu}~\asd+\qu\\
S^{\dagger}(\pu)\D(\qu)^{\dagger}\asd\D(\qu)S(\pu)&=&\cosh{|\pu|}~\asd+\overline{I}_\pu\sinh{\pu}~\as+\oqu\\
S^{\dagger}(\pu)\D(\qu)^{\dagger}\asd\as\D(\qu)S(\pu)&=&\cosh^2{|\pu|}~\asd\as+\frac{1}{2}I_\pu\sinh(2|\pu|)~(\asd)^2
+\qu\cosh{|\pu|}~\asd\\
&+&\frac{1}{2}\overline{I}_\pu\sinh(2|\pu|)~\as^2+\sinh^2{|\pu|}~\as\asd+\overline{I}_\pu\qu\sinh{|\pu|}~\as\\
&+&\oqu\cosh{|\pu|}~\as+I_\pu\oqu\sinh{|\pu|}~\asd+|\qu|^2.
\end{eqnarray*}
\end{proposition}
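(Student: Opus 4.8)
The plan is to reduce the three identities to the two displacement relations of Proposition \ref{dis} and the two squeeze relations of Proposition \ref{Psqop}, using throughout that on the slice $\C_I$ all the scalars $\pu,\qu,\oqu,I_\pu,\overline{I}_\pu$ commute with one another.

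First I would peel off the displacement operator. By Proposition \ref{dis} we have $\D(\qu)^\dagger\as\D(\qu)=\as+\qu$ and $\D(\qu)^\dagger\asd\D(\qu)=\asd+\oqu$, so that $S(\pu)^\dagger\D(\qu)^\dagger\as\D(\qu)S(\pu)=S(\pu)^\dagger(\as+\qu)S(\pu)$ and similarly for $\asd$. Next I would dispose of the constant term: since $\pu,\qu\in\C_I$ commute and left multiplication by any quaternion commutes with $\as$ and $\asd$ by Proposition \ref{xAq}, left multiplication by $\qu$ commutes with the whole operator $S(\pu)=e^{\pu\cdot K_+-\opu\cdot K_-}$, whence $S(\pu)^\dagger\qu S(\pu)=\qu S(\pu)^\dagger S(\pu)=\qu$, and likewise $S(\pu)^\dagger\oqu S(\pu)=\oqu$. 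Combining these with Proposition \ref{Psqop}(i),(ii), and writing $I_\pu=e^{i\theta\sigma(\hat{n})}$ and $\overline{I}_\pu=e^{-i\theta\sigma(\hat{n})}$, yields the first two displayed identities at once.

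For the number-operator identity I would insert the resolution $\D(\qu)S(\pu)S(\pu)^\dagger\D(\qu)^\dagger=I_{\HIB}$ between $\asd$ and $\as$, giving
$$S(\pu)^\dagger\D(\qu)^\dagger\asd\as\,\D(\qu)S(\pu)=\big(S(\pu)^\dagger\D(\qu)^\dagger\asd\D(\qu)S(\pu)\big)\big(S(\pu)^\dagger\D(\qu)^\dagger\as\D(\qu)S(\pu)\big),$$
and then multiply the two expressions already obtained. Expanding the product of $\cosh|\pu|\,\asd+\overline{I}_\pu\sinh|\pu|\cdot\as+\oqu$ against $\cosh|\pu|\,\as+I_\pu\sinh|\pu|\cdot\asd+\qu$ produces exactly nine summands matching the stated right-hand side, where I would use $\cosh|\pu|\sinh|\pu|=\tfrac12\sinh(2|\pu|)$, the cancellation $I_\pu\overline{I}_\pu=\overline{I}_\pu I_\pu=\mathbb{I}_2$, the identity $\oqu\qu=|\qu|^2$, and slice commutativity to move each scalar past the operators into the displayed order.

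The main obstacle is in fact the only conceptual point: justifying the commutations, namely that $S(\pu)$ passes through the scalar $\qu$ and that $I_\pu$ and $\qu$ may be freely interchanged. This is precisely where confinement to the slice $\C_I$ is indispensable; off the slice the term $S(\pu)^\dagger\D(\qu)^\dagger I_\pu\cdot\asd\,\D(\qu)S(\pu)$ fails to factor, exactly as flagged in the discussion preceding this section. Once these commutations are in place, the remainder is the routine bookkeeping of collecting the nine terms.
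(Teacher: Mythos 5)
Your proposal is correct and is exactly the paper's intended argument: the paper's proof is the one-line remark that the result is straightforward from Propositions \ref{Psqop} and \ref{dis}, and your write-up simply fills in those details (peeling off $\D(\qu)$ first, conjugating by $S(\pu)$, inserting $\D(\qu)S(\pu)S(\pu)^{\dagger}\D(\qu)^{\dagger}=I_{\HIB}$ for the $\asd\as$ identity, and using slice commutativity plus Proposition \ref{xAq} to move scalars through $S(\pu)$). Your explicit justification that left multiplication by $\qu$ commutes with $S(\pu)$ only because $\pu,\qu$ lie in the same slice $\C_I$ is precisely the point the paper flags as the obstruction in the general quaternionic case, so nothing is missing.
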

\begin{proof}
Proof is straight forward from Propositions \ref{Psqop} and \ref{dis}.
\end{proof}
With the aid of the above proposition we can easily calculate the following.
\begin{eqnarray*}
\langle\as\rangle_{\qu\pu}&=&\qu\\
\langle\asd\rangle_{\qu\pu}&=&\oqu\\
\langle N\rangle_{\qu\pu}&=&\sinh^2{|\pu|}+|\qu|^2\\
\langle X\rangle_{\qu\pu}&=&|\qu|\cos{\theta_\qu}\\
\langle Y\rangle_{\qu\pu}&=&|\qu|\sin{\theta_\qu}\\
\langle \as^2\rangle_{\qu\pu}&=&\frac{1}{2}I_{\pu}\sinh(2|\pu|)+\qu^2\\
\langle (\asd)^2\rangle_{\qu\pu}&=&\frac{1}{2}\overline{I}_{\pu}\sinh(2|\pu|)+\oqu^2\\
\langle \as\asd\rangle_{\qu\pu}&=&\cosh^2{|\pu|}+|\qu|^2.\\
\end{eqnarray*}
Hence, using the above, we can calculate the following.
\begin{eqnarray*}
\langle X^2\rangle_{\qu\pu}&=&\frac{1}{4}\left\{\cosh^2(2|\pu|)+2|\qu|^2
+\sinh(2|\pu|)\cos{\theta_\pu}+2|\qu|^2\cos(2\theta_{\qu})\right\},\\
\langle Y^2\rangle_{\qu\pu}&=&\frac{1}{4}\left\{\cosh^2(2|\pu|)+2|\qu|^2
-\sinh(2|\pu|)\cos{\theta_\pu}-2|\qu|^2\cos(2\theta_{\qu})\right\}
\end{eqnarray*}
and
$$\langle N^2\rangle_{\qu\pu}=\frac{1}{2}\sinh^2(2|\pu|)+\sinh^4|\pu|+2|\qu|^2\sinh^2|\pu|+|\qu|^2\cosh(2|\pu|)
+|\qu|^2\sinh(2|\pu|)+|\qu|^4.$$
Further
\begin{eqnarray*}
\langle \Delta N\rangle_{\qu\pu}^2&=&\frac{1}{2}\sinh^2(2|\pu|)+|\qu|^2\cosh(2|\pu|)
+|\qu|^2\sinh(2|\pu|)\\
&=&\frac{1}{2}\sinh^2(2|\pu|)+|\qu|^2e^{2|\pu|}.
\end{eqnarray*}
For a normalized state $\eta_{\pu}^{\qu}$, in terms of the quadrature operator $X$, the signal-to-noise ratio and the Mandel parameter are, respectively, defined as \cite{Gaz}
$$SNR=\frac{\langle X\rangle_{\qu\pu}^2}{\langle \Delta X\rangle_{\qu\pu}^2}\quad \text{and}\quad
Q_M=\frac{\langle\Delta N\rangle_{\qu\pu}}{\langle N\rangle_{\qu\pu}}-1.$$
Using the above expectation values one can easily obtain these quantities.
%%%%%%%%%%%%%%%%%%%%%%%%%%%%%%%%%%%%%%%%%%%%%%%%%%%%%%%%%%%%%%%%%%%%%%%%%%%%%%%%%%%%%%%%
\begin{proposition}\label{disen}
The operator $S(\pu)$ satisfies the disentanglement formula
\[
S(\pu)= e^{\pu\cdot K_+ - \bar{\pu}\cdot K_-}= e^{\qu\cdot K_+} e^{-2\log(\cosh(2r)) K_0} e^{-\bar{\qu}\cdot K_-}
\]
where $\pu=r\sigma_0e^{i\theta\sigma(\hat n)}=re^{i\theta\sigma(\hat n)}$, $\qu=\tanh(r) e^{i\theta\sigma(\hat n)}$.
\end{proposition}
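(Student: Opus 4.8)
The plan is to tame the non-commutativity by first absorbing the quaternionic phase of $\pu$ into a redefined triple of $su(1,1)$ generators, thereby reducing the claim to the purely real-parameter disentanglement familiar from the complex setting. Write $\pu=r\,u$ with $u=e^{i\theta\sigma(\hat n)}$; since $\pu$ is non-real, $u$ is a unit quaternion lying on the single slice $\C_{I_\pu}$, so that $u\,\overline{u}=\overline{u}\,u=1$ and $u,\overline{u}$ commute with one another. I would then set
\[
\widetilde{K}_+=u\cdot K_+,\qquad \widetilde{K}_-=\overline{u}\cdot K_-,\qquad \widetilde{K}_0=K_0 .
\]

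First I would check that $\{\widetilde{K}_+,\widetilde{K}_-,\widetilde{K}_0\}$ again obeys the $su(1,1)$ relations. This is exactly the step where the non-commutativity could intrude, so it is the main obstacle: the verification rests on Proposition \ref{xAq} (and its immediate extension to $K_\pm=\tfrac12(\asd)^2,\tfrac12\as^2$ and $K_0$, so that left multiplication by $u,\overline{u}$ slides freely past $K_+,K_-,K_0$), together with Proposition \ref{lft_mul}(c),(e) and the slice identity $u\overline{u}=1$. For instance $[\widetilde{K}_+,\widetilde{K}_-]=(u\overline{u})\cdot[K_+,K_-]=[K_+,K_-]=-2K_0=-2\widetilde{K}_0$, while $[\widetilde{K}_0,\widetilde{K}_+]=u\cdot[K_0,K_+]=\widetilde{K}_+$ and $[\widetilde{K}_0,\widetilde{K}_-]=\overline{u}\cdot[K_0,K_-]=-\widetilde{K}_-$. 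Granting this, Proposition \ref{lft_mul}(c),(e) give $\pu\cdot K_+-\overline{\pu}\cdot K_-=r(\widetilde{K}_+-\widetilde{K}_-)$, so that $S(\pu)=e^{r(\widetilde{K}_+-\widetilde{K}_-)}$ is now the exponential of a generator carrying a single \emph{real} coefficient $r$.

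Next I would disentangle the real-parameter squeeze $e^{r(\widetilde{K}_+-\widetilde{K}_-)}$ by the standard $su(1,1)$ computation, which is legitimate here because the Fock-space operators carry no domain obstruction and the Baker--Campbell--Hausdorff machinery applies. Concretely one may either evaluate both sides in the faithful two-dimensional representation $\widetilde{K}_+\mapsto\left(\begin{smallmatrix}0&1\\0&0\end{smallmatrix}\right)$, $\widetilde{K}_-\mapsto\left(\begin{smallmatrix}0&0\\-1&0\end{smallmatrix}\right)$, $\widetilde{K}_0\mapsto\tfrac12\left(\begin{smallmatrix}1&0\\0&-1\end{smallmatrix}\right)$ and match entries, or set up the Wei--Norman ansatz $e^{r(\widetilde{K}_+-\widetilde{K}_-)}=e^{f(r)\widetilde{K}_+}e^{g(r)\widetilde{K}_0}e^{h(r)\widetilde{K}_-}$, differentiate in $r$, and solve the resulting first-order system with $f(0)=g(0)=h(0)=0$. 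Either route yields $f(r)=\tanh r$, $g(r)=-2\log(\cosh r)$, $h(r)=-\tanh r$.

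Finally I would reabsorb the phase. Since $g(r)$ is real it commutes with everything and $e^{g(r)\widetilde{K}_0}=e^{g(r)K_0}$; and by Proposition \ref{lft_mul}(c), $e^{\tanh(r)\widetilde{K}_+}=e^{(\tanh(r)u)\cdot K_+}=e^{\qu\cdot K_+}$ with $\qu=\tanh(r)e^{i\theta\sigma(\hat n)}$, while $e^{-\tanh(r)\widetilde{K}_-}=e^{-(\tanh(r)\overline{u})\cdot K_-}=e^{-\overline{\qu}\cdot K_-}$, using $\overline{\qu}=\tanh(r)e^{-i\theta\sigma(\hat n)}$. Collecting the three factors produces the asserted decomposition, the $K_0$-coefficient emerging as $-2\log(\cosh r)$ (with $|\pu|=r$).
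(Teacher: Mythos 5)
Your proposal is correct (at the same level of BCH-formal rigor the paper itself adopts), but it takes a genuinely different route from the paper's proof. The paper proceeds by adjoint actions: it sets $A=\pu\cdot K_+-\opu\cdot K_-$, sums nested commutators to get $e^{A}K_0e^{-A}$ and $e^{A}K_-e^{-A}$ (its (\ref{disK0}) and (\ref{disK-})), computes the same adjoint actions for the ansatz $B=e^{\alpha\cdot K_+}e^{\beta\cdot K_0}e^{\gamma\cdot K_-}$ with mutually commuting slice coefficients (its (\ref{disK01}) and (\ref{disK111})), and fixes $\alpha,\beta,\gamma$ by matching coefficients. You instead absorb the unit phase $u=e^{i\theta\sigma(\hat n)}$ into rotated generators $\widetilde K_+=u\cdot K_+$, $\widetilde K_-=\overline{u}\cdot K_-$, $\widetilde K_0=K_0$; your verification that this triple again satisfies the $su(1,1)$ relations is exactly right and correctly identified as the only point where non-commutativity could intrude (it rests precisely on Proposition \ref{xAq}, Proposition \ref{lft_mul}(c),(e) and $u\overline{u}=1$). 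After that, the problem is the classical \emph{real}-parameter disentanglement of $e^{r(\widetilde K_+-\widetilde K_-)}$, settled by the $2\times 2$ faithful representation or Wei--Norman; your values $f=\tanh r$, $g=-2\log(\cosh r)$, $h=-\tanh r$ check out. Your route buys a cleaner separation of concerns --- all quaternionic content sits in one algebraic lemma, and indeed the paper's appeals to ``the same computations as in the classical case'' are essentially your rotation argument left implicit --- while the paper's route yields the adjoint-action formulas (\ref{disK0})--(\ref{disK111}) as byproducts of independent interest (they are the $su(1,1)$ analogues of Proposition \ref{Psqop}). Both routes share the same formal caveat of transferring a group identity, verified in a finite-dimensional representation or by formal BCH series, to the unbounded Fock-space operators; the paper explicitly sanctions this in Section 3. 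One further point in your favor: your middle exponent $-2\log(\cosh r)$ agrees with the paper's own proof (which concludes $\beta=-2\log(\cosh(r))$) and with the classical complex formula, so the $\cosh(2r)$ in the displayed statement of the proposition is evidently a typo, which your independent derivation confirms.
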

\begin{proof}
To show the statement we look for $\alpha,\beta,\gamma\in\mathbb H$ such that $\alpha,\beta,\gamma$ mutually commuting, namely they belong to the same slice, such that the formula
\begin{equation}\label{disent}
e^{\pu\cdot K_+ - \bar{\pu}\cdot K_-}= e^{\alpha\cdot K_+} e^{\beta\cdot K_0} e^{\gamma\cdot K_-}
\end{equation}
holds.
We set $A=\pu\cdot K_+ - \bar{\pu}\cdot K_-$ and using the Baker-Campbell-Hausdorff formula we compute
$e^{A}K_0e^{-A}$.
We have that
\[
\begin{split}
&[A, K_0]= -(\pu\cdot K_+ + \bar{\pu}\cdot K_-),\\
&[A,[A, K_0]]=2^2 |\pu|^2K_0\\
&[A,[A,[A, K_0]]]= - 4 |\pu|^2 (\pu\cdot K_+ + \bar{\pu}\cdot K_-)\\
&[A,[A,[A,[A, K_0]]]]= 2^4 |\pu|^4 K_0\\
&[A,[A,[A,[A,[A, K_0]]]]]= -2^4 |\pu|^4 K_0(\pu\cdot K_+ + \bar{\pu}\cdot K_-)\\
& ............
\end{split}
\]
from which we deduce
\begin{equation}\label{disK0}
e^{A}K_0e^{-A}= -\sinh(2r)( e^{i\theta\sigma(\hat n)}\cdot K_+ +e^{-i\theta\sigma(\hat n)}\cdot K_-)+\cosh(2r)K_0.
\end{equation}
We note that the computations mimic the analogous computations in the classical case, since $K_+,K_-,K_0$ belong to $su(1,1)$, the quaternionic variables behaves like a variable commuting with the operators with respect to the left multiplication and the various quaternionic variables are assumed to be mutually commuting. Thus, reasoning as in the classical case, one obtains
\begin{equation}\label{disK-}
e^{A}K_-e^{-A}= \sinh^2(r) e^{i2\theta\sigma(\hat n)}\cdot K_+ +\cosh^2 (r) K_- -\sinh(2r)e^{i\theta\sigma(\hat n)}\cdot K_0.
\end{equation}
Let us denote by $B$ the operator on the right hand side of formula (\ref{disent}) and let us compute $BK_0B^{-1}$ and $BK_-B^{-1}$. To compute the first one we start first by computing $e^{\alpha\cdot K_+}K_0e^{-\alpha\cdot K_+}$. A standard computation shows that
\[
e^{\gamma\cdot K_-}K_0e^{-\gamma\cdot K_-}= \gamma\cdot K_-.
\]
Then one computes $e^{\beta\cdot K_0}(\gamma \cdot K_-)e^{-\beta\cdot K_0}$ and finally $e^{\alpha\cdot K_+}$. The result is
\begin{equation}\label{disK01}
BK_0B^{-1}= (1-2\alpha\gamma e^{-\beta})\cdot K_0+\gamma e^{-\beta}\cdot K_- -\alpha (1-e^{-\beta}\alpha\gamma)\cdot K_+.
\end{equation}
Reasoning in a similar way, and basically using the same computations as in the classical case, we obtain
\begin{equation}\label{disK111}
BK_-B^{-1}= - 2\alpha e^{-\beta}\cdot K_0+e^{-\beta}\cdot K_- + e^{-\beta}\alpha^2\cdot K_+
\end{equation}
By comparing the coefficients obtained in (\ref{disK0}), (\ref{disK-}) and in (\ref{disK01}), (\ref{disK111}) one obtains
 $\alpha=e^{i\theta \sigma(\hat n)}\tanh(r)$, $\beta=-2\log(\cosh(r))$, $\gamma= - e^{-i\theta \sigma(\hat n)}\tanh(r)$ and the statement follows.
\end{proof}
%%%%%%%%%%%%%%%%%%%%%%%%%%%%%%%%%%%%%%%%%%%%%%%%%%%%%%%%%
We will need the above disentanglement formula to realize the connection between squeezed states and Hermite polynomilals in a quaternion slice.
The quaternionic Hermite polynomials are given by 
\begin{equation}\label{qua_her}
H_n(\qu)=n!\sum_{m=0}^{[n/2]}\frac{(-1)^m(2\qu)^{n-2m}}{m!(n-2m)!},\mbox{~~for all~~}\qu\in\quat.
\end{equation}
The following Proposition connects the squeezed states to Hermite polynomilals in a quaternion slice that the squeezed basis vectors are essentially the Hermite polynomials times an exponential function.
\begin{proposition}
For any $\pu\in\mathbb{C}_I$ with $\pu\neq0$, \begin{equation}\label{r1}
\mathfrak{r}=\frac{\tanh(|\pu|)}{|\pu|}\pu,
\end{equation}
then the squeezed basis vector, in the Bargmann analytic (in $\mathbb{C}_I$) representation are given in terms of the complex Hermite polynomials by the expression,
\begin{equation}\label{SqSta_Her}
\Phi_n^{\pu}(\qu)=(S(\pu)\Phi_n)(\qu)=\frac{1}{\sqrt{n!}}(1-|\mathfrak{r}|^2)^{\frac{1}{4}}\left[\frac{\overline{\mathfrak{r}}}{2}\right]^{\frac{n}{2}}e^{\frac{\mathfrak{r}}{2}\qu^2}H_n\left( \left[ \frac{1}{2}(1-|\mathfrak{r}|^2)\overline{\mathfrak{r}}^{-1}\right]^{\frac{1}{2}}\qu \right).  
\end{equation}
\end{proposition}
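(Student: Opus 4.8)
The plan is to restrict everything to the slice $\C_I$, where all the quaternionic scalars entering $S(\pu)$ mutually commute and $\C_I$ behaves as a copy of $\mathbb{C}$. On this slice the Bargmann representation reduces to the classical Fock--Bargmann picture: by the identification of the creation and annihilation operators recalled in Subsection~\ref{CSLQH}, $\asd$ acts on slice-regular functions of $\qu$ as multiplication by $\qu$ and $\as$ acts as the derivative $d/d\qu$ (indeed $\frac{d}{d\qu}\Phi_n=\sqrt n\,\Phi_{n-1}$ and $\qu\cdot\Phi_n=\sqrt{n+1}\,\Phi_{n+1}$). First I would observe that, since $\pu=|\pu|e^{i\theta\sigma(\hat n)}$, the parameter of \eqref{r1} equals $\mathfrak{r}=\tanh(|\pu|)e^{i\theta\sigma(\hat n)}$, which is exactly the disentanglement variable of Proposition \ref{disen}. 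Using the values $\alpha=\mathfrak{r}$, $\beta=-2\log(\cosh|\pu|)$, $\gamma=-\overline{\mathfrak{r}}$ determined in the proof of that proposition, and $K_+=\tfrac12(\asd)^2$, $K_-=\tfrac12\as^2$, $K_0=\tfrac12(N+\tfrac12 I_{\HIB})$, this gives
\[
S(\pu)=e^{\mathfrak{r}\cdot K_+}\,e^{-2\log(\cosh|\pu|)\,K_0}\,e^{-\overline{\mathfrak{r}}\cdot K_-}=e^{\frac{\mathfrak{r}}{2}\cdot(\asd)^2}\,e^{-2\log(\cosh|\pu|)\,K_0}\,e^{-\frac{\overline{\mathfrak{r}}}{2}\cdot\as^2}.
\]

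The second step is to realise each of the three factors as an operator on functions of $\qu$. The rightmost factor becomes the second-order differential operator $\exp\!\left(-\tfrac{\overline{\mathfrak{r}}}{2}\,\tfrac{d^2}{d\qu^2}\right)$. For the middle factor I would write $e^{-2\log(\cosh|\pu|)K_0}=(\cosh|\pu|)^{-1/2}\,e^{-\log(\cosh|\pu|)N}$; since $N=\qu\,d/d\qu$ generates dilations, $e^{-\log(\cosh|\pu|)N}f(\qu)=f\!\big((\cosh|\pu|)^{-1}\qu\big)$, and using $(\cosh|\pu|)^{-1}=(1-|\mathfrak{r}|^2)^{1/2}$ (because $1-|\mathfrak{r}|^2=1-\tanh^2|\pu|=\operatorname{sech}^2|\pu|$) this factor contributes the scalar $(1-|\mathfrak{r}|^2)^{1/4}$ together with the substitution $\qu\mapsto(1-|\mathfrak{r}|^2)^{1/2}\qu$. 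The leftmost factor is simply multiplication by $e^{\frac{\mathfrak{r}}{2}\qu^2}$.

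The third step is to apply these to $\Phi_n(\qu)=\qu^n/\sqrt{n!}$, the only genuine computation being the action of the heat-type operator. Here I would invoke the operational form of the Hermite polynomials, $H_n(x)=e^{-\frac14\frac{d^2}{dx^2}}(2x)^n$; after the rescaling $x=(2\overline{\mathfrak{r}})^{-1/2}\qu$ this yields
\[
e^{-\frac{\overline{\mathfrak{r}}}{2}\frac{d^2}{d\qu^2}}\qu^n=\Big(\tfrac{\overline{\mathfrak{r}}}{2}\Big)^{n/2}H_n\!\big((2\overline{\mathfrak{r}})^{-1/2}\qu\big).
\]
Applying next the dilation $\qu\mapsto(1-|\mathfrak{r}|^2)^{1/2}\qu$ turns the Hermite argument into $\big[\tfrac12(1-|\mathfrak{r}|^2)\overline{\mathfrak{r}}^{-1}\big]^{1/2}\qu$ and produces the factor $(1-|\mathfrak{r}|^2)^{1/4}$, and finally multiplying by $e^{\frac{\mathfrak{r}}{2}\qu^2}$ and carrying the $1/\sqrt{n!}$ along assembles precisely the right-hand side of \eqref{SqSta_Her}.

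I do not expect a deep obstacle: the argument is a faithful transcription of the classical computation, and the point that must be checked is that it is legitimate to transcribe it. The exponential operators and the operational Hermite identity are algebraic identities in a single commuting variable, so they hold verbatim on $\C_I$ by the slice reduction to complex analysis used throughout the paper, and each exponential truncates to a finite sum on a fixed monomial, so no convergence question arises. The one bookkeeping subtlety is the choice of branch for the half-powers $[\overline{\mathfrak{r}}/2]^{n/2}$ and $\overline{\mathfrak{r}}^{-1/2}$; since $H_n$ contains only monomials of the parity of $n$, the prefactor $\overline{\mathfrak{r}}^{\,n/2}$ combines with the factor $\overline{\mathfrak{r}}^{-(n-2m)/2}$ coming from the $m$-th Hermite term to give the integer power $\overline{\mathfrak{r}}^{\,m}$, so the final expression is independent of the branch provided the same square root is used in both places.
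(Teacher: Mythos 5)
Your proposal is correct and follows essentially the same route as the paper's proof: both pass through the disentanglement formula of Proposition \ref{disen} with $\mathfrak{r}=\tanh(|\pu|)e^{i\theta\sigma(\hat n)}$, realize the three factors as multiplication by $e^{\frac{\mathfrak{r}}{2}\qu^2}$, the dilation generated by $\qu\partial_s+\frac{1}{2}I_{\HI^B_r}$ (giving the factor $(1-|\mathfrak{r}|^2)^{k/2+1/4}$ on $\qu^k$), and the heat-type operator $e^{-\frac{\overline{\mathfrak{r}}}{2}\partial_s^2}$ acting on the monomial $\Phi_n$. The only cosmetic difference is that you invoke the operational identity $H_n(x)=e^{-\frac14 \partial_x^2}(2x)^n$ up front, while the paper expands $e^{-\frac{\overline{\mathfrak{r}}}{2}\partial_s^2}\qu^n$ as the finite sum and recognizes the Hermite polynomial (\ref{qua_her}) at the end --- the same computation packaged differently, and your branch-independence remark is a sound extra check.
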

\begin{proof}
We have , from (\ref{r1}), $\log(1-|\mathfrak{r}|^2)=-2\log\cosh|\mathfrak{r}|$. So the above Proposition (\ref{disen}) enables us to write the squeeze state $S(\pu)$ as
\begin{equation}\label{sqst}
S(\pu)=e^{\frac{\mathfrak{r}}{2}\qu^2}\,e^{\frac{1}{2}\log(1-|\mathfrak{r}|^2)(\qu\partial_s+\frac{1}{2}I_{\HI^B_r})}\,e^{-\frac{\overline{\mathfrak{r}}}{2}\partial_s^2}.
\end{equation}
Now the basis vector $\Phi_n(\qu)=\dfrac{\qu^n}{\sqrt{n!}}$. Further left slice regular derivative of a regular function is regular, and for $\{\mathfrak{a}_m\}\subseteq\quat$, we have, for a right regular power series (see, for example {\cite{Thi1}}), 
\begin{equation}
\partial_s\left(\sum_{m=0}^{\infty}\mathfrak{a}_m\qu^m\right)=\sum_{m=0}^{\infty}m\mathfrak{a}_m\qu^{m-1}. 
\end{equation}
Thus, by doing a right regular power series expansion, we easily obtain, 
\begin{equation}
e^{-\frac{\overline{\mathfrak{r}}}{2}\partial_s^2}\Phi_n(\qu)=e^{-\frac{\overline{\mathfrak{r}}}{2}\partial_s^2}\left(\dfrac{\qu^n}{\sqrt{n!}} \right)= n!\sum_{m=0}^{[n/2]}\frac{(-1)^m\left( \frac{\overline{\mathfrak{r}}}{2}\right)^m (\qu)^{n-2m}}{m!(n-2m)!}. 
\end{equation}
One can note that, for any integer $k$, since $\qu\partial_s\qu^k=k\qu^k$, 
\begin{equation}\label{r2}
e^{\frac{1}{2}\log(1-|\mathfrak{r}|^2)(\qu\partial_s+\frac{1}{2}I_{\HI^B_r})}\qu^k=(\sqrt{1-|\mathfrak{r}|^2})^{k+\frac{1}{2}}\qu^k.
\end{equation}
Combining (\ref{sqst})-(\ref{r2}), and noting (\ref{qua_her}), the result (\ref{SqSta_Her}) follows.
\end{proof}
\section{Conclusion}
 Using the left multiplication on a right quaternionic Hilbert space we have defined unitary squeeze operator. Pure squeezed states have been obtained, with all the desired properties, analogous to their complex counterpart. Even though we have defined squeezed states with the aid of displacement operator and the squeeze operator the noncommutativity of quaternions prevented us in getting desired expectation values and variances. Even though it is a technical issue, there is no known technique to overcome this difficulty. In this regard, the only way out of this difficulty is to consider quaternionic slice-wise approach. We have defined squeezed states on quaternion slices and computed the expectation values of the quadrature operators. We have also proved a quaternionic disentanglement formula.
 
 In the application point of view squeezed states have several applications, particularly in coding and transmission of information through optical devices. These aspects are well explained for example in \cite{Ali, Gaz, Yuen} and the many references therein. Since we have used the matrix representation of quaternions, the squeezed states obtained in this note appear as matrix states. Further these states involve all four variables of quaternions. These features may give advantage in applications.

%\end{multicols}
%%%%%%%%%%%%%%%%%%%%%%%%%%%%%%%%%%%%%%%%%%%%%%%%%%%%%%%%%%%%%%%%%%%%%%%%%%%%%%%%%%
%
%%%%%%%%%%%%%%%%%%%%%%%%%%%%%%%%%%%%%%%%%%%%%%%%%%%%%%%%%%%%%%%%%%%%%%%%%%%%%%%%%%
\end{document}